\documentclass[11pt,a4paper]{article}

\usepackage[T1]{fontenc}
\usepackage[utf8]{inputenc}
\usepackage{lmodern}
\usepackage{amsmath,amssymb,amsthm,mathtools}
\usepackage{float}
\usepackage{booktabs,tabularx,array}
\usepackage{needspace}
\usepackage{enumitem}
\usepackage[margin=1in]{geometry}
\usepackage{microtype}
\usepackage{tikz}
\usetikzlibrary{arrows.meta,positioning}
\usepackage{hyperref}
\usepackage{authblk}
\usepackage[nameinlink,capitalize]{cleveref}
\hypersetup{hidelinks,pdftitle={Non-Adaptive Learning of Sparse Erd\H{o}s--R\'enyi Graphs via Affine Splitting}}

\newtheorem{theorem}{Theorem}
\newtheorem{lemma}{Lemma}
\newtheorem{proposition}{Proposition}
\newtheorem{corollary}{Corollary}
\newtheorem{fact}{Fact}
\theoremstyle{definition}
\newtheorem{definition}{Definition}
\newtheorem{example}{Example}
\crefname{fact}{Fact}{Facts}

\floatstyle{ruled}
\newfloat{algorithm}{htbp}{loa}
\floatname{algorithm}{Algorithm}
\crefname{algorithm}{Algorithm}{Algorithms}
\Crefname{algorithm}{Algorithm}{Algorithms}
\newcounter{algline}
\newlength{\algindent}
\newenvironment{algorithmic}[1][1]{%
  \small\setcounter{algline}{0}\setlength{\algindent}{0pt}%
  \begin{list}{}{%
    \setlength{\leftmargin}{2.2em}%
    \setlength{\labelwidth}{1.7em}%
    \setlength{\labelsep}{0.5em}%
    \setlength{\itemsep}{1pt}%
    \setlength{\parsep}{0pt}%
    \setlength{\topsep}{4pt}%
  }%
}{\end{list}}
\newcommand{\STATE}{%
  \stepcounter{algline}\item[\footnotesize\arabic{algline}:]%
  \hangindent=\algindent\hangafter=1\relax\hspace*{\algindent}}
\newcommand{\REQUIRE}{\item[]\textbf{Input:}\ }
\newcommand{\FOR}[1]{\STATE\textbf{for} #1 \textbf{do}\addtolength{\algindent}{1.2em}}
\newcommand{\ENDFOR}{\addtolength{\algindent}{-1.2em}\STATE\textbf{end for}}
\newcommand{\IF}[1]{\STATE\textbf{if} #1 \textbf{then}\addtolength{\algindent}{1.2em}}
\newcommand{\ELSE}{\addtolength{\algindent}{-1.2em}\STATE\textbf{else}\addtolength{\algindent}{1.2em}}
\newcommand{\ENDIF}{\addtolength{\algindent}{-1.2em}\STATE\textbf{end if}}
\newcommand{\RETURN}{\STATE\textbf{return}\ }

\newcommand{\F}{\mathbb F}
\newcommand{\Pp}{\mathbb P}
\newcommand{\E}{\operatorname{E}}
\newcommand{\Var}{\operatorname{Var}}
\newcommand{\Cov}{\operatorname{Cov}}
\newcommand{\ind}{\mathbf 1}
\newcommand{\PD}{\mathcal{PD}}
\newcommand{\kbar}{\bar k}
\newcommand{\Ehat}{\widehat E}
\newcommand{\cE}{\mathcal E}
\newcommand{\cG}{\mathcal G}
\newcommand{\cN}{\mathcal N}
\newcommand{\cT}{\mathcal T}
\newcommand{\cC}{\mathcal C}
\newcommand{\bigO}{\mathcal O}
\newcommand{\ER}{\mathrm{ER}}

\title{\bfseries Non-Adaptive Learning of Sparse Erd\H{o}s--R\'enyi Graphs via Affine Splitting}

\author[]{Hoang Ta}
\affil[]{\small Department of Computer Science, Hanoi University of Science and Technology, Vietnam}

\begin{document}
\maketitle

\begin{abstract}
Graph learning from edge-detecting queries concerns the reconstruction of an unknown edge set on a known vertex set. Each query reports whether a specified vertex subset contains at least one edge. We study non-adaptive schemes, in which all queries are fixed before any outcomes are observed, with the goal of achieving exact recovery using few queries and fast decoding. For general graphs on $n$ vertices with at most $k$ edges, non-adaptive recovery requires $\Omega(\min\{k^2\log n,n^2\})$ queries in the worst case, even when a small error probability is allowed. In this paper, we consider Erd\H{o}s--R\'enyi ($\ER$) graphs $G\sim\ER(n,q)$, with expected edge count $\kbar=q\binom{n}{2}$. Our scheme uses $\bigO(\kbar\log n)$ queries and achieves exact recovery in $\bigO(\kbar\log n)$ decoding time with probability tending to one throughout the regime $\kbar\to\infty$ and $\kbar=o(n^2)$. This improves the previous $\bigO(\kbar^{1+\delta}\log n)$ decoding guarantee for any fixed $\delta>0$, while maintaining the same query order. The guarantee also extends beyond the previously studied regime $\kbar=\Theta(n^{2\theta})$ with fixed $\theta\in(0,1)$. Our approach builds on the binary splitting method used in prior work, which organizes vertices into a hierarchy of successively smaller groups. We introduce three main changes: (i) we use random affine hash functions over a finite field to process each candidate pair in constant time; (ii) we apply the splitting procedure directly to the full graph, avoiding the need to combine solutions to multiple smaller graph-learning subproblems; and (iii) we bound the total decoding workload directly rather than deriving separate high-probability bounds on candidate counts at each level.
\end{abstract}

\section{Introduction}
\label{sec:introduction}

Recovering an unknown graph from indirect observations is a common problem in learning theory and combinatorial inference. We study \emph{edge-detecting queries}: each query selects a subset of vertices and reveals whether the subgraph induced by this subset contains at least one edge. Such queries arise, for example, in identifying which pairs of chemicals react, using experiments that indicate only whether some reaction occurs in a selected mixture~\cite{bouvel2005combinatorial}. The problem can also be viewed as a constrained form of group testing in which the items are the potential edges and each test contains all potential edges within a selected vertex subset~\cite{AJS26}. Our goal is to recover the edge set exactly using few queries and an efficient decoder.

Two query settings are commonly considered: \emph{adaptive} and \emph{non-adaptive}. In the adaptive setting, each query may depend on previous outcomes, allowing the learner to refine its search as information becomes available. The query complexity in this setting is well understood: a graph with $k$ edges contained in a known host graph $H$ can be recovered using $k\log_2(|E(H)|/k)+\bigO(k)$ queries, which is optimal up to an additive $\bigO(k)$ term~\cite{J02}. In particular, $\bigO(k\log n)$ queries suffice for arbitrary graphs on $n$ vertices. Randomized algorithms attain the same order of expected query complexity with a constant number of rounds when $k$ is known~\cite{AC08}, and with $\bigO(\log^* n)$ rounds when $k$ is unknown~\cite{AB19}. In the non-adaptive setting, all query subsets are fixed before any outcomes are observed, so the queries can be evaluated in parallel. This restriction substantially increases the query complexity in the worst case: recovering graphs with at most $k$ edges requires $\Omega(\min \{k^2\log n,n^2\})$ non-adaptive queries, even when a small probability of error is allowed~\cite{AB19,LFS19}.

These worst-case lower bounds motivate the study of recovery under a random graph model. We consider Erd\H{o}s--R\'enyi graphs $G\sim \ER(n,q)$, in which each potential edge is present independently with probability $q$, and we denote by $\kbar:=q\binom{n}{2}$ the expected number of edges. For $\kbar=\Theta(n^{2\theta})$ with fixed $\theta\in(0,1)$, the COMP and DD schemes of~\cite{LFS19} achieve vanishing error probability with $\bigO(\kbar\log n)$ queries, but their decoders require time at least quadratic in $n$. The same work establishes the converse bound $\Omega(\kbar\log(n^2/\kbar))$ for any non-adaptive design with vanishing error probability, so this number of queries is optimal up to constant factors in these regimes. It also proposes a scheme based on the GROTESQUE group testing algorithm~\cite{CJBJ17}, which reduces the decoding time to $\bigO(\kbar\log^2\kbar+\kbar\log n)$ at the cost of increasing the number of queries to $\bigO(\kbar\log\kbar\cdot\log^2 n)$~\cite{LFS19}. Recently, non-adaptive binary splitting~\cite{PS20,CN20} was adapted to this problem in~\cite{TS25}, achieving $\bigO(\kbar\log n)$ queries and $\bigO(\kbar^{1+\delta}\log n)$ decoding time for any fixed $\delta>0$, under the same polynomial sparsity assumption. Whether the decoding time can be reduced to $\bigO(\kbar\log n)$ was left open there, and the analysis does not cover slowly growing edge counts such as $\kbar=\bigO(\log n)$~\cite{TS25}.

In this work, we resolve this question by achieving $\bigO(\kbar\log n)$ queries and $\bigO(\kbar\log n)$ decoding time simultaneously, for every sequence satisfying $\kbar\to\infty$ and $\kbar=o(n^2)$. Both exact recovery and the decoding-time bound hold with probability tending to one.

\paragraph{Challenges.}
The binary splitting framework partitions the vertices into groups, called \emph{blocks}, and tracks candidate pairs of blocks as the partition is refined down to single vertices. Achieving $\bigO(\kbar\log n)$ decoding time requires addressing two challenges. First, processing a candidate pair in the basic design of~\cite{TS25} may require scanning $\Theta(\sqrt{\kbar})$ query outcomes. We introduce an affine test design that allows each candidate pair to be processed in constant time using at most one outcome lookup. Second, the total number of candidates must be controlled despite dependencies among their survival events. Edges within blocks create an additional difficulty: every query containing such a block is positive, so many candidate pairs may survive. We address these effects through covariance bounds for candidate survival and a bound on the cumulative contribution of internal edges across levels. These estimates allow us to bound the total decoding workload directly, instead of combining separate high-probability bounds at each level. The resulting analysis applies throughout the regime $\kbar\to\infty$ and $\kbar=o(n^2)$, including slowly growing expected edge counts.

\subsection{Main Result and Contributions}
\label{sec:contributions}

Our main result, stated formally in~\cref{thm:main}, is the following.
\begin{quote}
\textbf{Main result.}
Let $G\sim\ER(n,q)$, where $\kbar=q\binom{n}{2}$ satisfies $\kbar\to\infty$ and $\kbar=o(n^2)$. There exists a non-adaptive scheme using $\bigO(\kbar\log n)$ edge-detecting queries that, with probability $1-o(1)$ over the graph and the random design, recovers the edge set exactly in $\bigO(\kbar\log n)$ decoding time.
\end{quote}
The decoding bound holds under the computational assumptions in~\cref{sec:problem}, starting from the stored design and query outcomes. Our contributions are as follows.
\begin{enumerate}[label=(\roman*),itemsep=3pt]
\item \emph{Faster decoding with the same query order.}
We improve the $\bigO(\kbar^{1+\delta}\log n)$ decoding guarantee of~\cite{TS25}, for any fixed $\delta>0$, to $\bigO(\kbar\log n)$ while retaining $\bigO(\kbar\log n)$ queries.

\item \emph{A broader sparsity regime.}
The recovery guarantees of~\cite{LFS19,TS25} compared here assume $\kbar=\Theta(n^{2\theta})$ with fixed $\theta\in(0,1)$. Our guarantees require only $\kbar\to\infty$ and $\kbar=o(n^2)$. 

\item \emph{An affine design and a global workload analysis.}
We retain the block hierarchy and refinement rule of~\cite{TS25}, with new query assignments defined by random affine lines over a finite field. Within each family, two blocks with distinct slopes have a unique shared query that can be located in constant time; pairs with equal slopes are retained automatically. We combine covariance estimates for candidate survival with a bound on the cumulative contribution of internal edges across levels. These estimates control the total decoding workload with high probability throughout the stated sparsity regime.
\end{enumerate}

The query count matches the information-theoretic lower bound $\Omega(\kbar\log(n^2/\kbar))$ of~\cite{LFS19} up to constant factors whenever $\log(n^2/\kbar)=\Omega(\log n)$. \Cref{tab:comparison} compares our guarantees with those of previous schemes.

\begin{table}[!htbp]
\centering
\small
\setlength{\tabcolsep}{4pt}
\renewcommand{\arraystretch}{1.3}
\begin{tabularx}{\textwidth}{|>{\raggedright\arraybackslash}X|>{\centering\arraybackslash}p{3.0cm}|>{\centering\arraybackslash}p{3.8cm}|>{\centering\arraybackslash}p{2.8cm}|}
\hline
\textbf{Reference} & \textbf{Queries} & \textbf{Decoding time} & \textbf{Sparsity regime} \\
\hline
COMP/DD~\cite{LFS19} & $\bigO(\kbar\log n)$ & $\bigO(n^2\kbar\log n)$ & $\kbar=\Theta(n^{2\theta})$ \\
\hline
GROTESQUE-based~\cite{LFS19} & $\bigO(\kbar\log\kbar\log^2 n)$ & $\bigO(\kbar\log^2\kbar+\kbar\log n)$ & $\kbar=\Theta(n^{2\theta})$ \\
\hline
Binary splitting~\cite{TS25} & $\bigO(\kbar\log n)$ & $\bigO(\kbar^{1+\delta}\log n)$ & $\kbar=\Theta(n^{2\theta})$ \\
\hline
\textbf{This work} & $\bigO(\kbar\log n)$ & $\bigO(\kbar\log n)$ & $\kbar\to\infty$, $\kbar=o(n^2)$ \\
\hline
\end{tabularx}
\caption{Comparison of non-adaptive recovery guarantees with vanishing error probability. The parameters $\theta\in(0,1)$ and $\delta>0$ are fixed.}
\label{tab:comparison}
\medskip
\end{table}

\subsection{Related Work}
\label{sec:related}

\emph{Learning graphs and hypergraphs with edge-detecting queries.}
Early work on edge-detecting queries considered searching for a single hidden edge~\cite{AT88} and identifying several defective edges within a known host graph~\cite{J02}. Subsequent work studied the learning of general graphs under fully adaptive queries, a bounded number of adaptive rounds, and non-adaptive designs~\cite{AC08,AB19}. The problem has also been extended to hidden hypergraphs~\cite{AC06,ABM18}. For random graphs, the non-adaptive schemes of~\cite{LFS19,TS25} discussed above circumvent the worst-case lower bound of~\cite{AB19}. In particular, the binary splitting framework of~\cite{TS25} organizes the vertices into a hierarchy of blocks and refines surviving block pairs from coarse to fine levels; our scheme builds on the same framework. This line of work has also been extended to random hypergraphs: non-adaptive learning of random $k$-uniform hypergraphs was studied in~\cite{ART25}, and a hierarchical splitting approach for random hypergraphs was developed in~\cite{PT26}.

\emph{Efficiently decodable group testing.}
Efficient decoding is also a central objective in standard group testing; see~\cite{AJS26} for a survey. GROTESQUE~\cite{CJBJ17}, SAFFRON~\cite{LCPR19}, and bit-mixing coding~\cite{BCSYZ21} offer different tradeoffs between the number of tests and the decoding time. For $K$ defective items among $N$ items, non-adaptive binary splitting achieves $\bigO(K\log N)$ tests and $\bigO(K\log N)$ decoding time in the small-error setting~\cite{PS20}, and related splitting constructions attain an optimal number of tests with near-optimal decoding time in the zero-error setting~\cite{CN20}. Subsequent work improved the constants in the number of tests~\cite{WGG24} and addressed sparsity constraints and noisy outcomes~\cite{PST23,LM25}. These schemes do not transfer directly to graph learning: when potential edges are viewed as items, an admissible test must contain all potential edges induced by some vertex subset, which rules out arbitrary test designs.

\emph{Algebraic test designs and workload analysis.}
Our construction is also related to algebraic test designs. The Kautz--Singleton construction~\cite{KS64} converts evaluations of polynomials over a finite field into binary test assignments, and its performance in probabilistic group testing was analyzed in~\cite{IKWO19}. Our affine incidence rule is of this form with degree-one polynomials, whose coefficients are sampled independently for each block. Here, however, the algebraic structure serves a different purpose: it locates the unique shared query of a pair of blocks with distinct slopes. Likewise, bounding the total number of candidates visited by a splitting decoder is an established technique in group testing~\cite{PS20,CN20,WGG24}. Our analysis extends it to account for the dependencies between block pairs and for the cumulative contribution of internal edges across all levels of the hierarchy.


\subsection{Technical Overview}
\label{sec:overview}

\emph{Typical graphs.}
We first define a \emph{typical set}, consisting of graphs satisfying the structural bounds needed for our analysis, and show that $G\sim\ER(n,q)$ belongs to this set with probability $1-o(1)$. These bounds refer to a fixed hierarchy of vertex groups, called \emph{blocks}, starting from $\Theta(\sqrt{\kbar})$ blocks and repeatedly splitting each block into two children until singletons remain. A typical graph has an edge count close to $\kbar$, a total of $\bigO(\sqrt{\kbar})$ edges inside the initial blocks, and $\bigO(\sqrt{\kbar})$ edges leaving any block at any level. Its internal-edge counts, weighted by the number of blocks at each level, also sum to $\bigO(\kbar\log n)$. We establish these properties using concentration estimates and then analyze the algorithm for an arbitrary fixed typical graph.

\emph{Affine query assignments.}
The design uses independent families of $p^2$ queries, where $p=\Theta(\sqrt{\kbar})$ is prime. Within each family, the queries are indexed by $(r,s)\in\F_p^2$. Each block $u$ receives an affine map $P_u(r)=a_ur+b_u$, with coefficients chosen independently and uniformly from $\F_p$, and belongs to query $(r,s)$ precisely when $P_u(r)=s$. If $a_u\ne a_v$, the two blocks share exactly one query within the family, whose coordinates are
\[
r_{uv}^*=(b_v-b_u)(a_u-a_v)^{-1},
\qquad
s_{uv}^*=a_ur_{uv}^*+b_u.
\]
These coordinates can be computed in constant time under the assumptions in~\cref{sec:problem}. This allows direct access to a shared query, removing the scan through $\Theta(\sqrt{\kbar})$ repetitions in the basic decoder of~\cite{TS25}. We use one family at each non-final level and $R=\log n$ independent families at the singleton level. All families are sampled before observing any outcomes, giving $\bigO(\kbar\log n)$ non-adaptive queries in total.

\emph{Decoding by binary refinement.}
The decoder follows the refinement rule of~\cite{TS25}, building on non-adaptive splitting for group testing~\cite{PS20,CN20}. It starts from all block pairs at the coarsest level. For a candidate pair with distinct slopes, it reads the shared query outcome and discards the pair if that outcome is negative. Pairs with equal slopes are retained automatically. At a non-final level, each retained pair generates the six pairs among its four child blocks, with duplicates removed (\cref{fig:refinement}). The four cross pairs preserve edges between the parent blocks, while the two sibling pairs preserve edges within them. Thus, every true edge remains in the union of some candidate pair at each level. At the singleton level, the decoder applies the same retention rule for the $R$ final rounds without generating children, and returns the surviving vertex pairs.

\begin{figure}[!htbp]
\centering
\begin{tikzpicture}[
  blk/.style={draw, thick, rounded corners=2pt, minimum width=1.1cm, minimum height=0.6cm},
  every node/.style={font=\small}]
  \node[blk] (U) at (0,2.2) {$u$};
  \node[blk] (V) at (4.4,2.2) {$v$};
  \draw[thick] (U) -- node[above] {retained pair} (V);
  \node[blk] (U1) at (-0.8,0) {$u_L$};
  \node[blk] (U2) at (0.8,0) {$u_R$};
  \node[blk] (V1) at (3.6,0) {$v_L$};
  \node[blk] (V2) at (5.2,0) {$v_R$};
  \foreach \c in {U1,U2} \draw[->, gray] (U) -- (\c);
  \foreach \c in {V1,V2} \draw[->, gray] (V) -- (\c);
  \draw[thick, blue] (U1.south) to[bend right=25] (V1.south);
  \draw[thick, blue] (U1.south) to[bend right=35] (V2.south);
  \draw[thick, blue] (U2.south) to[bend right=20] (V1.south);
  \draw[thick, blue] (U2.south) to[bend right=30] (V2.south);
  \draw[thick, red, dashed] (U1.north) to[bend left=50] (U2.north);
  \draw[thick, red, dashed] (V1.north) to[bend left=50] (V2.north);
  \draw[thick, blue] (6.4,1.4) -- (7.0,1.4) node[right, black] {cross pairs};
  \draw[thick, red, dashed] (6.4,0.8) -- (7.0,0.8) node[right, black] {sibling pairs};
\end{tikzpicture}
\caption{Refinement of a retained pair $\{u,v\}$ into the six unordered pairs among its four child blocks.}
\label{fig:refinement}
\end{figure}

\emph{Survival probabilities and dependencies.}
A pair is non-defective if the union of its blocks contains no edge. At each non-final level, a retained pair generates at most six child pairs. Since a non-defective pair survives with probability at most $1/24$, it contributes at most $1/4$ child pairs in expectation. To control dependencies, we bound the joint survival probability restricted to a separation event by the product of the marginal survival probabilities. Since separation fails with probability at most $7/p$, the covariance is bounded above by $7/p$, including for pairs sharing a block.

\emph{Bounding the total workload.}
The typical-graph bounds control the total contribution of defective pairs across all levels and final rounds by $\bigO(\kbar\log n)$. Combined with the conditional survival and covariance estimates, this yields first- and second-moment bounds for the candidate process. Centering the survivor counts by their means given earlier families produces errors with vanishing cross moments. A single deviation estimate for their accumulated sum then shows that the decoder examines $\bigO(\kbar\log n)$ candidates with high probability, giving the claimed decoding time. These bounds hold uniformly over typical graphs and therefore extend to the joint randomness of the graph and the design.

\paragraph{Organization.}
\Cref{sec:prelim} introduces the model, the computational assumptions, and the structural estimates for random graphs. \Cref{sec:design} presents the affine query design and the decoder. \Cref{sec:analysis} establishes the guarantees on the number of queries, exact recovery, and decoding time. All deferred and technical proofs are given in the appendix.

\section{Problem Setup and Preliminaries}
\label{sec:prelim}

In this section, we first introduce the notation used in the paper and then state the problem and the computational assumptions. Finally, we establish properties of the random graph used to design the testing and decoding algorithms and prove their performance guarantees.

\paragraph{Notation.}
Throughout, $\log$ denotes the base-$2$ logarithm, $\ln$ the natural logarithm, and $e$ Euler's number. For a finite set $X$, the collection of its unordered two-element subsets is denoted by $\binom X2$. The binomial distribution with parameters $m$ and $p'$ is written $\operatorname{Bin}(m,p')$, and the uniform distribution on $X$ is written $\operatorname{Unif}(X)$. Expectation is denoted by $\E[X]$, with square brackets enclosing its argument throughout; subscripts indicate the randomness being averaged whenever this is not clear from the context, and in particular $\E_t[X]$ is defined in~\cref{sec:global}. The relation $f\ll g$ means $f=o(g)$. Constants denoted by $c,C,C',\ldots$ are absolute and may change between occurrences unless explicitly fixed.. Finally, $\ind\{\cdot\}$ denotes the indicator of an event.


The unknown object is an undirected simple graph $G=(V,E)$ on the known vertex set $V=[n]:=\{1,\ldots,n\}$. Under the Erd\H{o}s--R\'enyi model $G\sim \ER(n,q)$, each potential edge in $\binom V2$ is present independently with probability $q=q(n)$. The expected number of edges is written
\[
\kbar:=q\binom n2 .
\]
The graph is sampled once and then held fixed; the learner has no direct access to its edge set.

\paragraph{Query model.}
An \emph{edge-detecting query} is specified by a set $S\subseteq V$ and returns
\[
Y(S):=\ind\!\left\{E\cap\binom S2\ne\varnothing\right\}.
\]
Equivalently, if $X\in\{0,1\}^n$ denotes the indicator vector of $S$, then
\[
Y(S)=\bigvee_{\{x,y\}\in E}(X_x\wedge X_y).
\]
Such queries are monotone: $Y(S)\le Y(S')$ whenever $S\subseteq S'$, and $Y(\varnothing)=0$.

\paragraph{Non-adaptive recovery.}
All query sets are chosen before any outcome is observed and are evaluated in a single batch. Given the design together with its outcomes, a decoder returns an estimate $\Ehat\subseteq\binom V2$, and its error probability is
\[
P_e:=\Pp[\Ehat\ne E],
\]
the probability being taken over both the random graph and the randomness of the design. The objective is exact recovery with $P_e\to0$, using few queries and few decoding operations.

\paragraph{Sparsity Regime.} Previous works~\cite{LFS19,TS25} consider the polynomial sparsity regime
\begin{equation}
q=\Theta\!\left(n^{-2(1-\theta)}\right),
\qquad
\kbar=\Theta(n^{2\theta}),
\qquad \text{ where }\theta\in(0,1)\text{ is fixed}.
\label{eq:sparsity}
\end{equation}
In this work, we consider the broader regime with
\begin{equation}
\kbar=o(n^2),
\text{ and }
\kbar\longrightarrow \infty \text{ as } n \to \infty.
\label{eq:regime}
\end{equation}
All our results are stated under~\eqref{eq:regime}, which includes slowly growing edge counts such as $\kbar=\Theta(\log n)$.

\subsection{Mathematical and Computational Assumptions}
\label{sec:problem}

Decoding time is measured in the unit-cost word-RAM model with words of $\Theta(\log n)$ bits. Vertex labels, block labels, test indices, and field elements each occupy a constant number of words, and basic arithmetic, comparisons, and array accesses take $\bigO(1)$ time. The decoding analysis rests on the following assumptions.
\begin{enumerate}[label=(\roman*),itemsep=2pt]
\item Arithmetic in the prime field $\F_p$, where $p=\Theta(\sqrt{\kbar})$ is chosen in~\cref{sec:testdesign}, takes $\bigO(1)$ time. The inverses of all nonzero field elements are stored in a precomputed table of $\bigO(p)$ words, so that each inversion amounts to one lookup.
\item The affine coefficients and the recorded query outcomes are stored in arrays indexed by their stage and by the relevant block or grid coordinates. Each coefficient and each outcome is therefore accessible in $\bigO(1)$ time.
\item Candidate sets support insertion, deletion, and membership queries in $\bigO(1)$ amortized time. Every unordered pair is represented canonically as $(u,v)$ with $u<v$, and inserting a pair already present leaves the set unchanged. Enumerating or copying a set of $m$ pairs takes $\bigO(m+1)$ time.
\end{enumerate}

Without loss of generality, assume that $n$ is a power of two. Otherwise, add isolated vertices up to $N:=2^{\lceil\log n\rceil}<2n$ and use the parameter $\kbar_N:=q\binom N2=\Theta(\kbar)$, restricting queries and outputs to the original vertices. Coupling the padded graph with an Erd\H{o}s--R\'enyi extension of $G$ on $[N]$, together with edge preservation and monotonicity of the candidate sets, transfers the guarantees with unchanged asymptotic bounds.
\subsection{Level Graphs}
\label{sec:blocks}

Similar to~\cite{TS25}, our algorithms work on groups of nodes, called blocks, arranged in a binary hierarchy. At each level, all vertices within a block are included together in each query. The decoder maintains candidate pairs of blocks and refines them as each block is split into two children. Starting from a coarse partition, this process continues until every block contains a single vertex. We define the hierarchy and its associated level graphs as follows.

Set
\begin{equation}
g_0:=2^{\lceil\log\sqrt{\kbar}\rceil},
\qquad
g_j:=2^jg_0\quad(0\le j\le L),
\qquad
L:=\log(n/g_0).
\label{eq:g-levels}
\end{equation}
Under~\eqref{eq:regime}, for all sufficiently large $n$, the number $L$ is a positive integer and
\begin{equation}
\sqrt{\kbar}\le g_0<2\sqrt{\kbar}<n,
\qquad g_L=n.
\label{eq:g0-range}
\end{equation}
At level $j$, partition the vertex set into $g_j$ blocks
\[
\cG_u^{(j)}:=\left\{\frac{n(u-1)}{g_j}+1,\ldots,\frac{nu}{g_j}\right\},
\qquad u\in[g_j].
\]
For $j<L$, each block $\cG_u^{(j)}$ is the disjoint union of its children $\cG_{2u-1}^{(j+1)}$ and $\cG_{2u}^{(j+1)}$. Thus the partitions are nested: vertices in the same block at one level also lie in the same block at every earlier level. When considering a single level, we omit the superscript, write $g$ for the number of blocks, and identify the blocks with $[g]$.

The level graph records which pairs of blocks contain an edge in their union.

\begin{definition}
\label{def:defective}
At a level with $g$ blocks, a block $u$ is \emph{defective} if $\cG_u$ contains an edge of $G$. An unordered pair $\{u,v\}\in\binom{[g]}2$ is \emph{defective} if $\cG_u\cup\cG_v$ contains an edge of $G$. A block or pair that is not defective is called \emph{non-defective}. The \emph{level graph} is the simple graph $G_g:=([g],E_g)$, where
\[
E_g:=\left\{\{u,v\}\in\binom{[g]}2:
\cG_u\cup\cG_v\text{ contains an edge of }G\right\}.
\]
\end{definition}

For each $g\in\{g_0,\ldots,g_L\}$, define
\begin{align}
H_g&:=\#\{\{x,y\}\in E:x,y\text{ lie in the same block}\},
\label{eq:Hg-def}\\
D_g&:=|E_g|,
\label{eq:Dg-def}\\
\Delta_g^{\times}&:=\max_{u\in[g]}
\#\{e\in E:|e\cap\cG_u|=1\}.
\label{eq:cross-def}
\end{align}
Here $H_g$ counts internal edges, $D_g$ counts defective block pairs, and $\Delta_g^{\times}$ is the largest number of edges of $G$ crossing the boundary of a single block. An internally defective block is adjacent to every other block in $G_g$, so $D_g$ also counts pairs with no edge crossing between their two blocks. The number of other blocks joined to any fixed block by crossing edges of $G$ is at most $\Delta_g^{\times}$.

\subsection{Typical Graphs}
\label{sec:structure}

The analysis of our algorithms requires bounds on the edge count, internal edges, and crossing edges of the graph. We collect these properties in a set of typical graphs and show that an Erd\H{o}s--R\'enyi graph belongs to this set with probability tending to one. We first record a deterministic bound that relates the number of defective pairs to the edge count and the number of internal edges.

\begin{fact}
\label{lem:Dg-det}
For every graph and every level with $g$ blocks, we have
\begin{equation}
D_g\le |E|+gH_g.
\label{eq:Dg-bound}
\end{equation}
\end{fact}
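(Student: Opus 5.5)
We bound $D_g$ by a charging argument that splits the defective pairs according to the reason they are defective. Fix a level with $g$ blocks. By \cref{def:defective}, a pair $\{u,v\}\in E_g$ has an edge $\{x,y\}\in E$ with $x,y\in\cG_u\cup\cG_v$. Either some such edge has its endpoints in different blocks (one in $\cG_u$, one in $\cG_v$), or every such edge lies inside $\cG_u$ or inside $\cG_v$. Accordingly, write $E_g=A\cup B$. Here $A$ is the set of pairs joined by at least one crossing edge of $G$, and $B$ is the set of pairs in which at least one of the two blocks is internally defective. The union bound then gives $D_g\le |A|+|B|$.

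\emph{Crossing pairs.} For each $\{u,v\}\in A$, choose one edge $e\in E$ with one endpoint in $\cG_u$ and the other in $\cG_v$. The blocks partition $V$, so every edge whose endpoints lie in different blocks determines exactly one unordered pair of blocks. The chosen map $A\to E$ is therefore injective, and $|A|$ is at most the number of edges of $G$ whose endpoints lie in different blocks. Every edge is either internal or crossing, so this count equals $|E|-H_g$.

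\emph{Internal pairs.} Let $d$ be the number of internally defective blocks. Each pair in $B$ contains at least one such block, and each block lies in exactly $g-1$ pairs, so $|B|\le d(g-1)$. Each internally defective block contains at least one internal edge, and distinct blocks are disjoint, so $d\le H_g$. Hence $|B|\le H_g(g-1)$.

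Adding the two bounds gives
\[
D_g\le |E|-H_g+(g-1)H_g=|E|+(g-2)H_g\le |E|+gH_g,
\]
which is \eqref{eq:Dg-bound}. The argument is purely combinatorial, and no step should be a real obstacle. The only point needing care is the injectivity of the choice map $A\to E$: a crossing edge determines its block pair uniquely, which holds because the blocks at a fixed level are disjoint.
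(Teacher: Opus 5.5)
Your proof is correct and matches the reasoning the paper relies on. The paper states this as a Fact without a written proof, but its remarks after \cref{def:defective} (an internally defective block is adjacent to every other block, and each crossing edge determines a single block pair), together with the count of at most $H_g$ internally defective blocks used in the proof of \cref{lem:survival}, are exactly your split into crossing pairs and pairs containing an internally defective block; your accounting also gives the slightly sharper bound $D_g\le |E|+(g-2)H_g$.
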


We now define the typical set used in the analysis.

\begin{definition}[$\epsilon_n$-typical graphs]
\label{def:typical}
Let $(\epsilon_n)$ be a sequence of positive numbers with $\epsilon_n\to0$. We define $\cT(\epsilon_n)$ as the set of graphs $G=([n],E)$ satisfying the following conditions:
\begin{enumerate}[label=(\roman*),itemsep=2pt]
\item The number of edges satisfies
\begin{equation}
(1-\epsilon_n)\kbar\le |E|\le(1+\epsilon_n)\kbar.
\label{eq:struct-edges}
\end{equation}
\item The number of internal edges at the base level satisfies
\begin{equation}
H_{g_0}\le3\sqrt{\kbar}.
\label{eq:struct-internal}
\end{equation}
\item The crossing-edge counts across all levels satisfy
\begin{equation}
\max_{0\le j\le L}\Delta_{g_j}^{\times}\le24\sqrt{\kbar}.
\label{eq:struct-cross}
\end{equation}
\item The weighted sum of internal-edge counts satisfies
\begin{equation}
\sum_{j=0}^{L-1}g_jH_{g_j}\le 3\kbar\log n.
\label{eq:struct-weighted}
\end{equation}
\end{enumerate}
A graph in $\cT(\epsilon_n)$ is called \emph{$\epsilon_n$-typical}.
\end{definition}

Every typical graph satisfies $|E|\le2\kbar$ for all sufficiently large $n$. Summing~\eqref{eq:Dg-bound} over the non-final levels and using $L\le\log n$ gives
\begin{equation}
\sum_{j=0}^{L-1}D_{g_j}
\le L|E|+\sum_{j=0}^{L-1}g_jH_{g_j}
\le 5\kbar\log n \,.
\
\label{eq:sum-Dg}
\end{equation}
This bound controls the total contribution of defective pairs to the decoding workload. The following lemma shows that the typical set has probability tending to one.

\begin{lemma}
\label{lem:structure}
For $G\sim \ER(n,q)$, there exists a sequence $\epsilon_n\to0$ such that
\[
\Pp[G\in\cT(\epsilon_n)]=1-o(1).
\]
\end{lemma}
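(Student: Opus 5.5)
We verify each of the four conditions of \cref{def:typical} separately and conclude by a union bound. It suffices to show that each fails with probability $o(1)$, with $\epsilon_n$ chosen in condition (i).

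\emph{Condition (i).} Here $|E|\sim\operatorname{Bin}(\binom n2,q)$ has mean $\kbar$ and variance at most $\kbar$. Chebyshev's inequality gives $\Pp[\,||E|-\kbar|>\epsilon_n\kbar\,]\le 1/(\epsilon_n^2\kbar)$. We take $\epsilon_n:=\kbar^{-1/3}\to0$, which makes this probability $\kbar^{-1/3}=o(1)$ since $\kbar\to\infty$.

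\emph{Condition (ii).} At level $j$ each block has $n/g_j$ vertices, so the number of potential internal pairs is $g_j\binom{n/g_j}{2}\le n^2/(2g_j)$. Hence
\[
\E[H_{g_j}]\le q\,n^2/(2g_j)\approx \kbar/g_j\,(1+o(1)).
\]
For $j=0$, using $g_0\ge\sqrt{\kbar}$, this gives $\E[H_{g_0}]\le(1+o(1))\sqrt{\kbar}$. Since $H_{g_0}$ is binomial, Chebyshev (variance at most the mean $\bigO(\sqrt{\kbar})$) gives
\[
\Pp[H_{g_0}>3\sqrt{\kbar}]=\bigO(1/\sqrt{\kbar})=o(1).
\]

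\emph{Condition (iv).} By the same estimate, $\E[g_jH_{g_j}]\le(1+o(1))\kbar$ for every $j$. Summing over $L\le\log n$ levels gives
\[
\E\Bigl[\sum_{j<L}g_jH_{g_j}\Bigr]\le(1+o(1))\kbar\log n.
\]
Markov's inequality alone would only give probability $1/3$, which is not $o(1)$. So we use the variance instead. Since the partitions are nested, $H_{g_j}$ is nonincreasing in $j$, and we write
\[
\sum_j g_jH_{g_j}=\sum_{e\in\binom V2}w_e\,\ind\{e\in E\},\qquad w_e:=\sum_{j<L:\ e\text{ internal at level }j}g_j .
\]
This is a weighted sum of independent indicators. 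Because the weights are geometric, $w_e\le 2g_{j(e)}$, where $j(e)$ is the deepest level at which $e$ is internal. The variance is at most $\sum_e w_e^2q$. Grouping by $j(e)=j$, there are at most $n^2/(2g_j)$ such pairs, each with $w_e^2\le 4g_j^2$. This bounds the variance by $\sum_j 2qn^2g_j=\bigO(\kbar\, g_L)=\bigO(\kbar n)$. Chebyshev then gives deviation probability $\bigO(\kbar n/(\kbar\log n)^2)=\bigO(n/(\kbar\log^2 n))$. This is small only when $\kbar\gg n/\log^2 n$, which is not the case throughout the regime.

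This is where I expect the main obstacle. The fix I would try first is to split the sum. Levels with $g_j\le \kbar\log n/\omega_n$, for a slowly growing $\omega_n$, are handled by the variance argument restricted to them. The remaining deep levels have $g_j H_{g_j}$ nonzero only if some edge is internal at such a fine level. The expected number of such edges is about $\kbar/g_j$, summed geometrically to $\bigO(\omega_n/\log n)$. With a careful choice of $\omega_n$ (e.g.\ accepting the slack of the constant $3$ versus $1+o(1)$), the probability that any such edge exists is $o(1)$, which handles the deep levels by a first-moment bound.

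\emph{Condition (iii).} For a fixed block $u$ at level $j$, the count $X_u=\#\{e:|e\cap\cG_u|=1\}$ is $\operatorname{Bin}\bigl((n/g_j)(n-n/g_j),q\bigr)$ with mean at most $qn^2/g_j\le 2\kbar/g_j\cdot(1+o(1))\le 2\sqrt{\kbar}(1+o(1))$. A Chernoff bound for the threshold $24\sqrt{\kbar}$, roughly $12$ times the mean, gives
\[
\Pp[X_u>24\sqrt{\kbar}]\le 2^{-24\sqrt{\kbar}}.
\]
A union bound over all $\sum_j g_j\le 2n$ blocks then yields $2n\,2^{-24\sqrt{\kbar}}$. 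This is $o(1)$ only when $\sqrt{\kbar}\gg\log n$. For smaller $\kbar$ I would instead bound the number of blocks that can be touched at all. Blocks with $X_u>0$ are determined by the at most $2|E|$ endpoints of edges, each endpoint lying in exactly one block per level. So it suffices to union bound over the ancestors of endpoints, conditioning on the edge set's endpoints. Alternatively, one can note directly that $X_u\le\deg$ of the block, and bound the maximum by the sum of degrees of its vertices, using that $|E|\le2\kbar$ total. This second route also gives the bound for small $\kbar$: a block at a deep level containing many edge endpoints would force many edges into a tiny vertex set, which has negligible probability by a first-moment count.
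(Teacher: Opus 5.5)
Your arguments for conditions (i), (ii) and (iv) are sound. Chebyshev suffices for (i) and (ii). In (iv), your cutoff is a legitimate and slightly more elementary variant of the paper's Bernstein argument. On the levels with $g_j\le\kbar\log n/\omega_n$, your grouping gives variance at most $2qn^2\sum_j g_j=\bigO(\kbar^2\log n/\omega_n)$, so Chebyshev at deviation $2\kbar\log n$ fails with probability $\bigO(1/(\omega_n\log n))$. The first deeper level carries an internal edge with probability at most $\omega_n/\log n$, and nestedness then clears all finer levels. The genuine gap is in condition (iii).

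There, you bound $\E[X_u]$ uniformly by $2\sqrt{\kbar}(1+o(1))$, which discards the fact that the crossing-edge mean decays with depth. The resulting union bound $2n\cdot2^{-24\sqrt{\kbar}}$ diverges whenever $\kbar=o(\log^2 n)$, e.g.\ $\kbar=\log n$, and this lies inside the regime~\eqref{eq:regime} that the lemma must cover. Your fallbacks do not close this gap:
\begin{itemize}
\item Conditioning on the set of edge endpoints destroys the independence that the binomial tail relies on.
\item Even granting that step, a union bound over the $\bigO(|E|L)$ ancestor blocks gives $\bigO(\kbar\log n)\,2^{-24\sqrt{\kbar}}$, which still diverges for, say, $\kbar=\log\log n$.
\item The ``many edges in a tiny vertex set'' heuristic is misdirected, since the edges counted by $X_u$ have only one endpoint in the block.
\end{itemize}

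The missing step is to keep the level-dependent mean inside the tail bound. At level $j$,
\[
\lambda_j:=\E[X_u]\le qn^2/g_j\le 3\kbar/g_j\le 3\sqrt{\kbar}\,2^{-j}.
\]
Applying $\Pp[X_u\ge t]\le(e\lambda_j/t)^t$ with $t=d_*:=24\sqrt{\kbar}$ then gives a per-block failure probability of at most $(e/(8\cdot2^j))^{d_*}$. Since $g_j<2^{j+1}\sqrt{\kbar}$, the union bound over all blocks and levels is at most
\[
2\sqrt{\kbar}\,(e/8)^{d_*}\sum_{j\ge0}2^{-j(d_*-1)}\le 4\sqrt{\kbar}\,(e/8)^{24\sqrt{\kbar}},
\]
which tends to zero for every $\kbar\to\infty$. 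The extra factor $2^{-jd_*}$ in the tail overwhelms the $2^j$ growth in the number of blocks, so you effectively pay for $\bigO(\sqrt{\kbar})$ blocks rather than $2n$. This is the paper's argument, and it is what your ``first-moment count'' would have to become in order to be a proof.
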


The proof combines standard concentration inequalities with technical estimates across the block hierarchy. We provide the full proof in~\cref{app:typical}.

Table~\ref{tab:notation} summarizes the main notation used throughout the paper. 

\begin{table}[!htbp]
\centering
\small
\renewcommand{\arraystretch}{1.25}
\setlength{\tabcolsep}{5pt}
\begin{tabularx}{\textwidth}{|>{\raggedright\arraybackslash}p{0.24\textwidth}|>{\raggedright\arraybackslash}X|}
\hline
\textbf{Notation} & \textbf{Description} \\
\hline
\hline
\multicolumn{2}{|l|}{\textit{Graph and hierarchy}} \\
\hline
$n,q,\kbar$ & Number of vertices, edge probability, and expected number of edges $\kbar=q\binom{n}{2}$. \\
\hline
$g_0,g_j$ & Number of blocks at the base level and at level $j$, where $g_j=2^jg_0$. \\
\hline
$\cG_u^{(j)}$ & The $u$-th block at level $j$. \\
\hline
$H_g$ & Number of edges whose two endpoints lie in the same block, at the level with $g$ blocks. \\
\hline
$G_g,D_g$ & Level graph induced on the $g$ blocks, and its number of edges, i.e., the number of defective block pairs. \\
\hline
$\Delta_g^\times$ & Maximum, over all $g$ blocks, of the number of edges having exactly one endpoint in a given block. \\
\hline
$\cT(\epsilon_n)$ & Set of typical graphs (see~\cref{def:typical}). \\
\hline
\hline
\multicolumn{2}{|l|}{\textit{Affine query design}} \\
\hline
$p,\F_p$ & Prime of order $\Theta(\sqrt{\kbar})$, and the finite field with $p$ elements. \\
\hline
$P_u(r)=a_ur+b_u$ & Random affine map assigned to block $u$ in the current query family. \\
\hline
$S_{r,s},Y_{r,s}$ & Query consisting of the blocks with $P_u(r)=s$, and its binary outcome. \\
\hline
$\cC_e,T_e$ & For a pair $e=\{u,v\}$, the event $a_u\ne a_v$, and the collision point $T_e$ defined on $\cC_e$.\\
\hline
\hline
\multicolumn{2}{|l|}{\textit{Decoding and analysis}} \\
\hline
$\cE_e,I_e,q_e$ & Event that the pair $e$ is retained, its indicator $I_e=\ind\{\cE_e\}$, and its probability $q_e=\Pp[\cE_e]$. \\
\hline
$\rho_e$ & Conditional probability that the collision query of $e$ is positive, given $\cC_e$. \\
\hline
$\Omega_{ef}$ & Separation event for the collision queries of the pairs $e$ and $f$. \\
\hline
$\mathcal A_t$ & Affine coefficients of the query family used at stage $t$. \\
\hline
$\E_t[\cdot]$ & Expectation with respect to $\mathcal A_t$, with the graph and all earlier query families held fixed. \\
\hline
$\PD_t,Z_t$ & Set of candidate pairs entering stage $t$ and its cardinality $Z_t=|\PD_t|$. \\
\hline
$X_t,m_t,\Delta_t$ & Number of non-defective candidates retained at stage $t$, its conditional mean $m_t=\E_t[X_t]$, and the centered quantity $\Delta_t=X_t-m_t$. \\
\hline
$\beta_t,A$ & Deterministic upper bound on the number of defective candidates at stage $t$, and its sum $A=\sum_{t=0}^{M-1}\beta_t$. \\
\hline
$W$ & Total number of candidates over all stages, i.e., $W=\sum_{t=0}^{M}Z_t$. \\
\hline
$S$ & Sum of the centered stage errors, i.e., $S=\sum_{t=0}^{M-1}\Delta_t$. \\
\hline
\end{tabularx}
\caption{Notation.}
\label{tab:notation}
\end{table}
\section{Affine Splitting Approach for Graph Learning}
\label{sec:design}

Our approach follows the binary splitting framework for graph learning in~\cite{TS25}, building on non-adaptive splitting for group testing~\cite{CN20,PS20}. We use a different test design based on random affine maps over a finite field. For two blocks with distinct slopes, the intersection of their affine lines identifies their unique shared test in $\bigO(1)$ time. This removes the need to scan $\Theta(\sqrt{\kbar})$ repetitions per candidate pair, which is the main bottleneck in the basic decoder of~\cite{TS25}.

\subsection{Testing Procedure}
\label{sec:testdesign}

Each affine family assigns one random line to every block and one query to every point of a finite grid. Fix $C_1:=2000$ and let $p$ be the smallest prime satisfying $p\ge\lceil C_1\sqrt{\kbar}\rceil$. By Bertrand's postulate, for all sufficiently large $n$,
\begin{equation}
C_1\sqrt{\kbar}\le p<2\lceil C_1\sqrt{\kbar}\rceil
\le2C_1\sqrt{\kbar}+2.
\label{eq:p-range}
\end{equation}
In particular, $p=\Theta(\sqrt{\kbar})$ and $p^2=\Theta(\kbar)$. Any larger fixed value of $C_1$ is likewise admissible.

At a level with $g$ blocks, sample independently
\[
(a_u,b_u)\sim\operatorname{Unif}(\F_p^2),
\qquad
P_u(r):=a_ur+b_u,
\qquad u\in[g],\quad r\in\F_p,
\]
all arithmetic in these expressions being carried out in $\F_p$. For each grid point $(r,s)\in\F_p^2$, issue the query
\begin{equation}
S_{r,s}:=\bigcup_{u:P_u(r)=s}\cG_u.
\label{eq:affine-query}
\end{equation}
Each block participates in the $p$ tests lying on its line, and each family uses exactly $p^2$ queries. One independent family is sampled at every non-final level $j\in\{0,\ldots,L-1\}$. At the singleton level, we sample
\begin{equation}
R:=\log n
\label{eq:final-rounds}
\end{equation}
independent affine families. All families, including those belonging to different levels and to different final rounds, are mutually independent. Superscripts $(j)$ identify non-final levels, whereas superscripts $[\tau]$ identify final rounds.

\paragraph{Collision structure.}
For distinct blocks $u,v$, define
\[
\cC_{uv}:=\{a_u\ne a_v\}.
\]
On this event, the equation $P_u(r)=P_v(r)$ admits the unique solution
\begin{equation}
r_{uv}^*:=(b_v-b_u)(a_u-a_v)^{-1},
\qquad
s_{uv}^*:=a_ur_{uv}^*+b_u.
\label{eq:collision-point}
\end{equation}
The blocks $u$ and $v$ therefore occur together in exactly one query, indexed by $T_{uv}:=(r_{uv}^*,s_{uv}^*)$. If $a_u=a_v$, their lines either coincide, which occurs precisely when $b_u=b_v$, or do not meet at all. On this complementary event, set $T_{uv}:=\dagger$, where $\dagger$ denotes a symbol outside $\F_p^2$; for definiteness, set $r_{uv}^*=s_{uv}^*=0$ there. These assigned coordinates are invoked only inside events that also require the relevant collisions to occur.

For $T=(r,s)$, write $Y_T=Y_{r,s}:=Y(S_{r,s})$, and abbreviate the membership event $\{P_w(r)=s\}$ by $\{w\in T\}$, with the conventions $Y_{\dagger}:=0$ and $\{w\in\dagger\}:=\varnothing$. Under the computational assumptions, evaluating~\eqref{eq:collision-point} and reading the corresponding outcome take $\bigO(1)$ time. A negative collision test consequently rules out a candidate pair by means of a single lookup, in place of the scan through $\Theta(\sqrt{\kbar})$ repetitions required by the basic graph-splitting construction of~\cite{TS25}.

The collision calculation is illustrated by the following example, whose small field serves purely illustrative purposes.

\begin{example}
\label{ex:affine}
Consider a level with $g=8$ blocks and take $p=5$. Suppose two blocks carry the maps $P_u(r)=2r+1$ and $P_v(r)=4r+3$ over $\F_5$. Their values are
\[
\begin{array}{c|ccccc}
r&0&1&2&3&4\\\hline
P_u(r)&1&3&0&2&4\\
P_v(r)&3&2&1&0&4
\end{array}
\]
and coincide only at $r=4$. Indeed,
\[
r_{uv}^*=(3-1)(2-4)^{-1}=2\cdot3^{-1}=4,
\qquad s_{uv}^*=4.
\]
The pair thus appears together in the unique test indexed by $(4,4)$, as displayed in~\cref{fig:affine}.
\end{example}

\begin{figure}[!htbp]
\centering
\begin{tikzpicture}[scale=0.72,every node/.style={font=\small}]
\draw[step=1,gray!30,thin] (0,0) grid (5,5);
\foreach \r in {0,...,4} {\node[below] at (\r+0.5,0) {$\r$};}
\foreach \s in {0,...,4} {\node[left] at (0,\s+0.5) {$\s$};}
\node[below] at (2.5,-0.55) {$r\in\F_5$};
\node[rotate=90] at (-0.8,2.5) {$s\in\F_5$};
\foreach \r/\s in {0/1,1/3,2/0,3/2} {
  \fill[blue!30] (\r+0.15,\s+0.15) rectangle (\r+0.85,\s+0.85);
  \node[blue!70!black,font=\footnotesize\bfseries] at (\r+0.5,\s+0.5) {$u$};}
\foreach \r/\s in {0/3,1/2,2/1,3/0} {
  \fill[red!25] (\r+0.15,\s+0.15) rectangle (\r+0.85,\s+0.85);
  \node[red!70!black,font=\footnotesize\bfseries] at (\r+0.5,\s+0.5) {$v$};}
\fill[purple!35] (4.15,4.15) rectangle (4.85,4.85);
\node[purple!80!black,font=\footnotesize\bfseries] at (4.5,4.5) {$u,v$};
\draw[purple,very thick,rounded corners=2pt] (4.08,4.08) rectangle (4.92,4.92);
\draw[->,thick,purple!80!black] (6.25,4.5) -- (5.05,4.5);
\node[right,align=left,purple!80!black] at (6.25,4.5)
  {Collision point\\$(r_{uv}^*,s_{uv}^*)=(4,4)$};
\node[right,align=left] at (6.25,2.9)
  {$g=8$ blocks; $p^2=25$ tests.\\Only blocks $u,v$ are shown.};
\node[right,blue!70!black] at (6.25,1.55) {$P_u(r)=2r+1$};
\node[right,red!70!black] at (6.25,0.95) {$P_v(r)=4r+3$};
\end{tikzpicture}
\caption{An affine test assignments over $\F_5$.}
\label{fig:affine}
\end{figure}

The complete non-adaptive test design is presented in~\cref{alg:test}.

\begin{algorithm}[!htbp]
\caption{Affine testing procedure}
\label{alg:test}
\begin{algorithmic}[1]
\REQUIRE $n$, $\kbar$, $C_1=2000$.
\STATE Define $g_0,\ldots,g_L$ by~\eqref{eq:g-levels} and $R$ by~\eqref{eq:final-rounds}.
\STATE $p\gets$ smallest prime at least $\lceil C_1\sqrt{\kbar}\rceil$.
\FOR{$j=0,\ldots,L-1$}
  \STATE Sample independent uniform $(a_u^{(j)},b_u^{(j)})\in\F_p^2$ for $u\in[g_j]$.
  \STATE Record $Y_{r,s}^{(j)}\gets Y\!\left(\bigcup_{u:a_u^{(j)}r+b_u^{(j)}=s}\cG_u^{(j)}\right)$ for all $(r,s)\in\F_p^2$.
\ENDFOR
\FOR{$\tau=1,\ldots,R$}
  \STATE Sample independent uniform $(a_u^{[\tau]},b_u^{[\tau]})\in\F_p^2$ for $u\in[n]$.
  \STATE Record $Y_{r,s}^{[\tau]}\gets Y\!\left(\{u:a_u^{[\tau]}r+b_u^{[\tau]}=s\}\right)$ for all $(r,s)\in\F_p^2$.
\ENDFOR
\end{algorithmic}
\end{algorithm}

\paragraph{Distributional properties of the affine family.}
\label{sec:dist}
We record the one-point, two-point, and collision laws used in the analysis. In the remainder of this subsection, the partition and the graph are fixed, and probabilities refer to one affine family alone. Each law is a consequence of the independent uniform coefficient pairs and applies to every non-final level and every final round.

\begin{proposition}
\label{prop:one-point}
For every fixed $r\in\F_p$, the variables $(P_u(r))_{u\in[g]}$ are independent and uniform on $\F_p$.
\end{proposition}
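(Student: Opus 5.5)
We first observe that, for a fixed $r\in\F_p$, each value $P_u(r)=a_ur+b_u$ is a function of the coefficient pair $(a_u,b_u)$ of block $u$ alone. Independence across blocks then follows at once. The pairs $(a_u,b_u)$, $u\in[g]$, are sampled independently, so the variables $(P_u(r))_{u\in[g]}$ are measurable functions of independent random vectors, one per coordinate, and are therefore mutually independent. Once independence is in hand, it only remains to show that each $P_u(r)$ is uniform on $\F_p$.

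For uniformity, fix $u$ and $r$, and condition on the slope $a_u$. Under $(a_u,b_u)\sim\operatorname{Unif}(\F_p^2)$, the intercept $b_u$ is uniform on $\F_p$ and independent of $a_u$. For each fixed value $\alpha$ of $a_u$, the map $b\mapsto \alpha r+b$ is a translation of $\F_p$, hence a bijection. So, conditionally on $a_u=\alpha$, the value $P_u(r)$ is uniform. Averaging over $\alpha$ gives
\[
\Pp[P_u(r)=s]=\frac1p\qquad\text{for every } s\in\F_p.
\]
Equivalently, one can count directly: for each $s$, the set $\{(a,b):ar+b=s\}$ has exactly $p$ elements, one for each $a$, out of $p^2$ equally likely pairs.

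Combining the two steps, the joint law factorizes:
\[
\Pp[P_u(r)=s_u\ \forall u\in[g]]=\prod_{u\in[g]}\Pp[P_u(r)=s_u]=p^{-g}.
\]
This is exactly the claim. No step here is a genuine obstacle. The only point that needs care is that independence holds for each fixed $r$ separately. For different values of $r$, the variables $P_u(r)$ belonging to the same block are dependent, and that dependence is the subject of the two-point laws that come next, not of this proposition.
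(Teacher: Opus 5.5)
Your proof is correct and is essentially the paper's argument: uniformity follows because each $s\in\F_p$ has exactly $p$ preimages under $(a,b)\mapsto ar+b$ (your translation-in-$b$ argument is the same count), and independence holds because $P_u(r)$ depends only on the coefficient pair of block $u$. One correction to your closing remark: for distinct $r\ne r'$, the values $P_u(r)$ and $P_u(r')$ of a single block are in fact independent, since \cref{prop:two-point} says the pair is uniform on $\F_p^2$; dependence among one block's values appears only jointly across three or more points.
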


\begin{proof}
For every $s\in\F_p$ and every $a\in\F_p$, exactly one value $b=s-ar$ satisfies $ar+b=s$. Each value therefore has exactly $p$ preimages under $(a,b)\mapsto ar+b$, which proves uniformity. Independence follows since each image depends only on the coefficient pair of its own block.
\end{proof}


\begin{proposition}
\label{prop:two-point}
For every block $w$ and distinct $r,r'\in\F_p$, the vector $(P_w(r),P_w(r'))$ is uniform on $\F_p^2$.
\end{proposition}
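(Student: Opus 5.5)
The plan is to show that the linear map $(a,b)\mapsto(ar+b,\,ar'+b)$ from $\F_p^2$ to $\F_p^2$ is a bijection whenever $r\ne r'$. Since $(a_w,b_w)$ is uniform on $\F_p^2$, its image under a bijection is then also uniform on $\F_p^2$.

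For the bijection, fix a target $(s,s')\in\F_p^2$ and solve the system $ar+b=s$, $ar'+b=s'$. Subtracting the two equations gives $a(r-r')=s-s'$. Because $r-r'\ne0$ in the field $\F_p$, this forces $a=(s-s')(r-r')^{-1}$, and then $b=s-ar$. So every target has exactly one preimage. Equivalently, the map has matrix $\begin{pmatrix} r & 1\\ r' & 1\end{pmatrix}$, whose determinant $r-r'$ is nonzero.

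It follows that for each $(s,s')$,
\[
\Pp[(P_w(r),P_w(r'))=(s,s')]=\Pp[(a_w,b_w)=(a,b)]=p^{-2},
\]
where $(a,b)$ is the unique preimage found above. This is exactly uniformity on $\F_p^2$.

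No real obstacle arises. The only point that needs care is that invertibility of $r-r'$ uses the fact that $p$ is prime, so that $\F_p$ is a field. This mirrors the counting argument in the proof of \cref{prop:one-point}, with $p$ preimages there replaced by a single preimage here.
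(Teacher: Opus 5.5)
Your proposal is correct and follows the paper's own argument: the map $(a,b)\mapsto(ar+b,ar'+b)$ has matrix $\begin{pmatrix} r & 1\\ r' & 1\end{pmatrix}$ with nonzero determinant $r-r'$, so it is a bijection of $\F_p^2$ and carries the uniform law of $(a_w,b_w)$ to the uniform law. Solving the system explicitly for the unique preimage just spells out the same determinant argument.
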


\begin{proof}
The linear map $(a,b)\mapsto(ar+b,ar'+b)$ has matrix $\begin{pmatrix}r&1\\r'&1\end{pmatrix}$
with nonzero determinant $r-r'$. It is therefore a bijection of $\F_p^2$ and preserves the uniform law.
\end{proof}

The next proposition describes the complete membership distribution of the test selected by a collision.

\begin{proposition}
\label{prop:composition}
Let $u,v$ be distinct blocks. Condition on $\cC_{uv}$ and on the collision point $T_{uv}=(r,s)$ from~\eqref{eq:collision-point}, assuming that this event has positive probability. Then:
\begin{enumerate}[label=(\roman*),itemsep=2pt]
\item both $u$ and $v$ belong to the test $(r,s)$;
\item each block outside $\{u,v\}$ belongs to this test independently with probability $1/p$;
\item the conditional joint distribution of these membership indicators does not depend on $(r,s)$.
\end{enumerate}
\end{proposition}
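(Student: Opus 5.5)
The proof rests on one observation: the event $\{\cC_{uv},\,T_{uv}=(r,s)\}$ is measurable with respect to the coefficient pairs $(a_u,b_u)$ and $(a_v,b_v)$ alone. All other coefficient pairs $(a_w,b_w)$, $w\notin\{u,v\}$, are independent of these two. Conditioning on the event therefore leaves the joint law of $(a_w,b_w)_{w\notin\{u,v\}}$ unchanged: they remain i.i.d.\ uniform on $\F_p^2$. The three parts then follow in turn.

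For (i), I argue directly from the definition in~\eqref{eq:collision-point}. On $\cC_{uv}$ we have $a_u\ne a_v$, and $r=r^*_{uv}$ is the unique solution of $P_u(r)=P_v(r)$. Its value is $s=s^*_{uv}=P_u(r)=P_v(r)$. Hence $P_u(r)=s$ and $P_v(r)=s$, so both blocks lie in the test $(r,s)$.

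For (ii), fix a block $w\notin\{u,v\}$. Its membership indicator is $\ind\{P_w(r)=s\}=\ind\{a_wr+b_w=s\}$, where $r$ and $s$ are now fixed field elements. By the conditional independence above, these indicators for distinct $w$ are independent. By Proposition~\ref{prop:one-point}, applied to the unconditioned law of $(a_w,b_w)$ at the fixed point $r$, each $P_w(r)$ is uniform on $\F_p$. So each indicator equals one with probability exactly $1/p$.

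For (iii), the joint law of the indicators is a product of Bernoulli$(1/p)$ variables over $w\notin\{u,v\}$. The indicators of $u$ and $v$ are identically one by (i). None of these depends on $(r,s)$.

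The only point needing care is the conditional independence step. I would make it explicit by writing, for any event $B$ determined by $(a_w,b_w)_{w\notin\{u,v\}}$,
\[
\Pp[B\cap\cC_{uv}\cap\{T_{uv}=(r,s)\}]=\Pp[B]\,\Pp[\cC_{uv}\cap\{T_{uv}=(r,s)\}],
\]
which holds because the two events depend on disjoint sets of independent coefficient pairs. The positive-probability assumption lets us divide by the second factor. Beyond this, I expect no real obstacle; the statement is essentially bookkeeping about which coefficients the collision point depends on.
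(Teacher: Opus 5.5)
Your proof is correct and follows essentially the same route as the paper. The conditioning event depends only on the coefficients of $u$ and $v$, so the other coefficient pairs remain i.i.d.\ uniform, and \cref{prop:one-point} then gives independent Bernoulli$(1/p)$ memberships whose law does not depend on $(r,s)$; your explicit factorization identity simply spells out the independence step that the paper states in one line.
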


\begin{proof}
On the conditioning event, $P_u(r)=P_v(r)=s$, which proves~(i). This event depends only on the coefficients of $u$ and $v$. The coefficient pairs of all other blocks therefore remain independent and uniform on $\F_p^2$. By~\cref{prop:one-point}, their values at $r$ are independent and uniform on $\F_p$. Each value equals $s$ with probability $1/p$, proving~(ii). Thus, the membership indicators are independent Bernoulli variables with parameter $1/p$. Their joint distribution is the same for every $(r,s)$, which proves~(iii).
\end{proof}

Recall that, on $\cC_{uv}=\{a_u\ne a_v\}$, $r_{uv}^*$ is the first coordinate of the collision point defined in~\eqref{eq:collision-point}. These coordinates satisfy the following properties.
\begin{proposition}
\label{prop:first_cordinate}
The first coordinates $r_{uv}^*$ defined in~\eqref{eq:collision-point} satisfy the following properties.
\begin{enumerate}[label=(\roman*),itemsep=2pt]
\item For distinct blocks $u,v$, the first coordinate $r_{uv}^*$ is uniform on $\F_p$ conditional on $\cC_{uv}$.
\item For disjoint pairs $\{u,v\}$ and $\{u',v'\}$, the first coordinates $r_{uv}^*$ and $r_{u'v'}^*$ are independent and uniform on $\F_p$ conditional on $\cC_{uv}\cap\cC_{u'v'}$.
\item For pairs $\{u,v\}$ and $\{u,v'\}$ with $v\ne v'$, the first coordinates $r_{uv}^*$ and $r_{uv'}^*$ are independent and uniform on $\F_p$ conditional on $\cC_{uv}\cap\cC_{uv'}$ and the coefficients $(a_u,b_u)$ of the shared block.
\end{enumerate}
In both~(ii) and~(iii), the probability that both collision events hold and the two first coordinates are equal is at most $1/p$.
\end{proposition}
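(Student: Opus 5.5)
The plan is to reduce everything to a change-of-variables argument on the coefficient pairs. For (i), fix distinct blocks $u,v$ and condition on $a_u,a_v$ with $a_u\ne a_v$ (the event $\cC_{uv}$ depends only on the slopes). The intercepts $b_u,b_v$ are then independent and uniform on $\F_p$, and independent of the slopes. Hence $b_v-b_u$ is uniform on $\F_p$. Multiplying by the fixed nonzero element $(a_u-a_v)^{-1}$ is a bijection of $\F_p$, so $r^*_{uv}$ is uniform given $(a_u,a_v)$ on $\cC_{uv}$. Averaging over the slopes gives uniformity conditional on $\cC_{uv}$.

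For (ii), the pairs $\{u,v\}$ and $\{u',v'\}$ involve four distinct blocks, so their coefficient pairs are mutually independent. The event $\cC_{uv}$ and the variable $r^*_{uv}$ are functions of $(a_u,b_u,a_v,b_v)$ alone, and similarly $\cC_{u'v'}$ and $r^*_{u'v'}$ are functions of the primed coefficients. Conditioning on $\cC_{uv}\cap\cC_{u'v'}$ is therefore conditioning each independent group on its own event. Independence is preserved, and each marginal is uniform by (i).

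Part (iii) is the step that needs care, since the two pairs share the block $u$. I would condition on $(a_u,b_u)$ and on the slopes $a_v,a_{v'}$, with $a_v\ne a_u$ and $a_{v'}\ne a_u$. Then
\[
r^*_{uv}=(b_v-b_u)(a_u-a_v)^{-1}, \qquad r^*_{uv'}=(b_{v'}-b_u)(a_u-a_{v'})^{-1}.
\]
Given this conditioning, each expression is an affine bijection of the single free variable $b_v$, respectively $b_{v'}$. These intercepts are independent and uniform, and independent of the conditioning variables. So the two coordinates are independent and uniform given all of this information. Since the resulting conditional law does not depend on $a_v,a_{v'}$, averaging over the slopes within $\cC_{uv}\cap\cC_{uv'}$ yields the claim conditional on that event and on $(a_u,b_u)$.

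For the final claim, write
\[
\Pp\bigl[\cC_e\cap\cC_f\cap\{r^*_e=r^*_f\}\bigr]=\Pp[\cC_e\cap\cC_f]\cdot\Pp\bigl[r^*_e=r^*_f\mid\cC_e\cap\cC_f\bigr].
\]
In case (iii), further condition on $(a_u,b_u)$ inside the second factor. Two independent uniform variables on $\F_p$ coincide with probability exactly $1/p$, and $\Pp[\cC_e\cap\cC_f]\le1$. This gives the bound $1/p$ in both cases.

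The only subtle point is to state explicitly that the conditioning events are measurable with respect to the slopes and the shared coefficients, so that the free intercepts remain uniform after conditioning.
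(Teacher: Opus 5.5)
Your proposal is correct and follows essentially the same route as the paper: (i) uses an affine bijection in the free intercept, (ii) uses independence of the disjoint coefficient groups, and (iii) conditions on the shared block's coefficients $(a_u,b_u)$ so that the free intercepts $b_v,b_{v'}$ give independent uniform coordinates. The final $1/p$ bound is then obtained exactly as in the paper, by multiplying the conditional coincidence probability $1/p$ by $\Pp[\cC_e\cap\cC_f]\le 1$.
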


\begin{proof}
For~(i), fix $a_u,b_u,a_v$ with $a_u\ne a_v$. The coefficient $b_v$ remains uniform on $\F_p$, and
\[
r_{uv}^*=(b_v-b_u)(a_u-a_v)^{-1}.
\]
Since $a_u-a_v\ne0$, each value of $r_{uv}^*$ corresponds to exactly one value of $b_v$. Thus, $r_{uv}^*$ is uniform on $\F_p$. This holds for every choice of the fixed coefficients, so averaging over them proves~(i).

For~(ii), the two pairs involve disjoint sets of independent coefficients. Conditioning on $\cC_{uv}\cap\cC_{u'v'}$ preserves their independence because each event depends only on the coefficients of its own pair. Applying~(i) to each pair gives the claim.

For~(iii), fix $(a_u,b_u)$. The coefficient pairs $(a_v,b_v)$ and $(a_{v'},b_{v'})$ remain independent. The conditions $a_v\ne a_u$ and $a_{v'}\ne a_u$ restrict these pairs separately, so they remain independent after conditioning. The argument in~(i) shows that each first coordinate is uniform on $\F_p$, proving~(iii).

Two independent uniform elements of $\F_p$ are equal with probability $1/p$. In case~(iii), this probability is the same for every choice of the shared coefficients, so it remains $1/p$ after averaging over them. In either case, multiplying by the probability that both collision events hold gives the final bound.
\end{proof}


\subsection{Decoding Procedure}
\label{sec:decoder}

Given the affine coefficients and query outcomes from~\cref{alg:test}, the decoder is deterministic. It follows the binary refinement framework of~\cite{TS25}, maintaining candidate pairs of distinct blocks and refining them until the blocks are singletons. For $0\le j\le L$, let $\PD_j$ denote the candidate set entering level $j$. Initially, all pairs are candidates, so $\PD_0:=\binom{[g_0]}2$.

At each non-final level $j$, the decoder examines every pair $\{u,v\}\in\PD_j$. If $a_u^{(j)}\ne a_v^{(j)}$, it computes the collision point $T_{uv}^{(j)}$ using~\eqref{eq:collision-point} and discards the pair if $Y_{T_{uv}^{(j)}}^{(j)}=0$. Pairs with equal slopes are retained automatically. Write $u_L,u_R$ and $v_L,v_R$ for the children of $u$ and $v$, respectively. Each retained pair contributes the following six pairs to $\PD_{j+1}$:
\begin{equation}
\begin{gathered}
\{u_L,v_L\},\quad \{u_L,v_R\}, \quad 
\{u_R,v_L\},\quad \{u_R,v_R\}, \quad 
\{u_L,u_R\},\quad \{v_L,v_R\}.
\end{gathered}
\label{eq:children}
\end{equation}
As in~\cite{TS25}, the four cross pairs preserve edges between the parent blocks, while the two sibling pairs preserve edges within them. Each child pair is stored only once, even if it is generated by several retained pairs.

At level $L$, the blocks are singletons. The decoder applies the same retention rule for $R$ final rounds, using the independent affine family assigned to each round and generating no further children. Let $\PD_{L+\tau}$ denote the candidate set after round $\tau$, where $1\le\tau\le R$. These sets satisfy $\PD_{L+\tau}\subseteq\PD_{L+\tau-1}$, and the decoder returns $\Ehat:=\PD_{L+R}$.

For the analysis, a \emph{stage} means either a non-final level or a final round. There are $M:=L+R$ stages, indexed by $t=0,\ldots,M-1$. Stage $t=j$ processes non-final level $j$, and stage $t=L+\tau-1$ processes final round $\tau$. Thus $\PD_t$ is the candidate set entering stage $t$, and $\PD_M$ is the output. The complete procedure is given in~\cref{alg:decoder}; superscripts $(j)$ and $[\tau]$ identify the affine families used at level $j$ and final round $\tau$, respectively.

\begin{algorithm}[!htbp]
\caption{Affine splitting decoder}
\label{alg:decoder}
\begin{algorithmic}[1]
\REQUIRE $n$, $\kbar$, $R$, and the affine coefficients and query outcomes from~\cref{alg:test}.
\STATE $\PD_0\gets\binom{[g_0]}2$.
\FOR{$j=0,\ldots,L-1$}
  \STATE $\PD_{j+1}\gets\varnothing$.
  \FOR{$\{u,v\}\in\PD_j$}
    \IF{$a_u^{(j)}=a_v^{(j)}$}
      \STATE Retain $\{u,v\}$.
    \ELSE
      \STATE Compute $T_{uv}^{(j)}$ using~\eqref{eq:collision-point}.
      \STATE Retain $\{u,v\}$ if and only if $Y_{T_{uv}^{(j)}}^{(j)}=1$.
    \ENDIF
    \IF{$\{u,v\}$ is retained}
      \STATE Insert its six child pairs from~\eqref{eq:children} into $\PD_{j+1}$.
    \ENDIF
  \ENDFOR
\ENDFOR
\FOR{$\tau=1,\ldots,R$}
  \STATE $\PD_{L+\tau}\gets\PD_{L+\tau-1}$.
  \FOR{$\{x,y\}\in\PD_{L+\tau-1}$}
    \IF{$a_x^{[\tau]}\ne a_y^{[\tau]}$}
      \STATE Compute $T_{xy}^{[\tau]}$ using~\eqref{eq:collision-point}.
      \STATE Delete $\{x,y\}$ from $\PD_{L+\tau}$ if $Y_{T_{xy}^{[\tau]}}^{[\tau]}=0$.
    \ENDIF
  \ENDFOR
\ENDFOR
\RETURN $\Ehat\gets\PD_M$.
\end{algorithmic}
\end{algorithm}

Processing a candidate requires one slope comparison, at most one field inversion, a constant number of further field operations, at most one outcome lookup, and at most six set updates. Under our computational assumptions, this takes a constant amortized number of operations per candidate. Including initialization, set copying, and output, the total decoding time is therefore $\bigO(M+\sum_{t=0}^{M}|\PD_t|)$. We bound this quantity in~\cref{sec:analysis}.

\section{Algorithmic Guarantees}
\label{sec:analysis}

This section proves the guarantees on the number of queries, recovery, and decoding time. Using the distributional properties in~\cref{sec:dist}, we first bound the probability that a candidate survives one stage and the covariance between survival indicators. We then use these bounds to control the total number of candidates processed by the decoder. Finally, we show that every true edge is preserved and that the final rounds remove all non-edges with high probability.

The guarantees are summarized in the following theorem, whose running-time statement is understood under the computational conventions of~\cref{sec:problem}.

\begin{theorem}
\label{thm:main}
Let $G\sim \ER(n,q)$, where $\kbar=q\binom n2$ satisfies $\kbar\to\infty$ and $\kbar=o(n^2)$. Consider the affine test design in Algorithm~\ref{alg:test} and the splitting decoder in Algorithm~\ref{alg:decoder} with $C_1=2000$ and $R=\log n$. Then the following guarantees hold as $n\to\infty$:
\begin{enumerate}[label=(\roman*),itemsep=3pt]
\item the test design uses $\bigO(\kbar\log n)$ non-adaptive queries;
\item the decoder recovers $E$ exactly with probability $1-o(1)$;
\item the decoding time is $\bigO(\kbar\log n)$ with probability $1-o(1)$.
\end{enumerate}
\end{theorem}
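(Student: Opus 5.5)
The plan is to condition on a fixed $\epsilon_n$-typical graph $G\in\cT(\epsilon_n)$, which by \cref{lem:structure} costs only $o(1)$ probability, and then prove (ii) and (iii) over the randomness of the design alone, with bounds uniform over typical graphs.

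Part (i) is immediate. There are $L+R\le 2\log n$ families, each of $p^2=\Theta(\kbar)$ queries by \eqref{eq:p-range}, so the total is $\bigO(\kbar\log n)$.

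For (ii), the argument has two halves.

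\emph{No false negatives.} A defective pair always has a positive collision test, and pairs with equal slopes are kept automatically. Hence every defective pair is retained. By \eqref{eq:children}, every edge $\{x,y\}\in E$ has, at each level, a candidate pair whose union contains $x$ and $y$. An induction on $j$ then shows that $\{x,y\}$ reaches $\PD_L$ and is never deleted in the final rounds.

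\emph{No false positives.} Fix a non-edge $\{x,y\}\in\PD_L$. By \cref{prop:composition}, conditional on $\cC_{xy}$ its collision test also contains each other vertex independently with probability $1/p$. The test is positive only if it contains an edge with an endpoint in $\{x,y\}$ or an edge among the other vertices. A union bound gives
\[
\Pr[\text{positive}]\le \frac{\deg x+\deg y}{p}+\frac{|E|}{p^2}.
\]
By \eqref{eq:struct-cross} at level $L$, $\deg x+\deg y\le 48\sqrt{\kbar}$, and $|E|\le 2\kbar$. Together with $\Pr[\cC^c_{xy}]=1/p$ and $C_1=2000$, the survival probability per round is at most $1/24$, say.

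Over the $R=\log n$ independent rounds, a non-edge survives with probability at most $24^{-\log n}=n^{-\log 24}$. Since $|\PD_L|\le n^2$ trivially, a union bound over all non-edges already gives failure probability $o(1)$, because $\log 24>2$. The argument does not even need the workload bound here.

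For (iii), the decoding time is $\bigO(M+W)$ with $W=\sum_t Z_t$, so it suffices to show $W=\bigO(\kbar\log n)$ with probability $1-o(1)$. The same composition argument, applied at a general level, shows that a non-defective candidate pair $e$ survives stage $t$ with conditional probability $q_e\le 1/24$. This uses \cref{prop:composition}, with block sizes absorbed by working at block level and \eqref{eq:struct-cross} bounding the edges leaving $u$ or $v$.

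Each retained pair creates at most $6$ children. Defective pairs number at most $D_{g_j}$ at non-final stages, and at most $|E|$ at final rounds. This gives
\[
Z_{t+1}\le 6\beta_t+6X_t,
\qquad
X_t=m_t+\Delta_t,
\qquad
m_t\le Z_t/24.
\]
Hence $Z_{t+1}\le 6\beta_t+Z_t/4+6\Delta_t$. Summing over $t$ and absorbing the $\tfrac14 W$ term yields
\[
W\le C\bigl(Z_0+A+|S|\bigr),
\]
where $Z_0=\binom{g_0}{2}=\bigO(\kbar)$ and $A=\bigO(\kbar\log n)$ by \eqref{eq:sum-Dg}.

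It remains to control $S$, and this is the main obstacle. The $\Delta_t$ form a martingale difference sequence with respect to the families, so
\[
\E[S^2]=\sum_t\E\bigl[\E_t[\Delta_t^2]\bigr],
\qquad
\E_t[\Delta_t^2]=\sum_{e,f\in\PD_t}\Cov_t(I_e,I_f).
\]
I would bound the covariances via \cref{prop:first_cordinate}. On the separation event $\Omega_{ef}$ (distinct first coordinates, both collisions occurring) the two collision tests are distinct. I would try to show that, conditioned on this, the two survival events are bounded by the product of the marginals, since the memberships at distinct $r$ of each third block are independent by \cref{prop:two-point}. Then the covariance is at most $\Pr[\Omega_{ef}^c]\le 7/p$. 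This would give
\[
\E_t[\Delta_t^2]\le Z_t+7Z_t^2/p.
\]
The risk is the $Z_t^2/p$ term: with $Z_t$ of order $\kbar$ and $p$ of order $\sqrt{\kbar}$, it is of order $\kbar^{3/2}$. The per-stage deviation is then about $\kbar^{3/4}$, which is $o(\kbar)$, so after summation $\E[S^2]$ should be small relative to $(\kbar\log n)^2$. The first-moment bound makes $\E[Z_t]=\bigO(\kbar+\beta$-terms$)$ at each stage; a truncation or stopping-time argument (stop once $Z_t$ exceeds $C\kbar$ plus the recent $\beta$ contributions) should keep $Z_t^2$ under control. Chebyshev's inequality then gives $|S|=o(\kbar\log n)$ with probability $1-o(1)$. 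This conclusion is uniform over typical $G$, so averaging over the graph completes (iii).
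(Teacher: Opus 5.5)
\textbf{Gap in part (iii).} Parts (i) and (ii) match the paper: the query count, the inductive edge-preservation argument, and the per-round survival bound $1/24$ for a non-edge, iterated to $24^{-\log n}$ and union-bounded over fewer than $n^2$ pairs, are the paper's argument (\cref{lem:no-fp}). For (iii), your reduction $\tfrac34W\le Z_0+6A+6S$ and the martingale identity
\[
\E[S^2]=\sum_t\E[\Delta_t^2]\le\sum_t\E[Z_t]+\frac7p\sum_t\E[Z_t^2]
\]
are also the paper's. However, the proof stops at the decisive step: nothing in your proposal bounds $\sum_t\E[Z_t^2]$.

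The heuristic that $Z_t$ is of order $\kbar$ is not available. $Z_t$ is random, and a typical graph only controls $\sum_j g_jH_{g_j}$, so a single $\beta_t=D_{g_t}$ may be of order $\kbar\log n$. Your per-stage truncation would also need a proof that the threshold is crossed with probability $o(1)$. With only the covariance bound, the variance estimate $Z_t+7Z_t^2/p$ gives, via Chebyshev, a per-stage crossing probability of order $1/p$. A union bound over $M=\Theta(\log n)$ stages then gives order $\log n/\sqrt{\kbar}$, which does not vanish when $\kbar=\bigO(\log^2 n)$. That is exactly the slowly growing regime the theorem covers.

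A smaller point: your separation event omits the exclusions $w\notin T_f$ for $w\in e\setminus f$ and $w\notin T_e$ for $w\in f\setminus e$. The paper needs these so that, after conditioning on the special coefficients, monotonicity yields the product bound. They are also where the count $3+|e\triangle f|\le7$ comes from. Citing \cref{lem:covariance} avoids the issue.

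\textbf{The missing idea.} The paper uses a second-moment recursion (\cref{lem:L2}). Square $Z_{t+1}\le6(\beta_t+X_t)$, take $\E_t$, and use $\E_t[X_t^2]=m_t^2+\Var_t(X_t)\le Z_t^2/576+Z_t+7Z_t^2/p$. This gives, for large $n$,
\[
\E_t[Z_{t+1}^2]\le72\beta_t^2+\Bigl(\frac18+\frac{252}{p}\Bigr)Z_t^2+36Z_t\le\frac12Z_t^2+C(\beta_t^2+1).
\]
Summing this contracting recursion and using $\sum_t\beta_t^2\le A^2$ yields $\sum_t\E[Z_t^2]\le C_2(Z_0^2+A^2+M)$. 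Hence $\E[S^2]=\bigO(\kbar\log n+(\kbar^2\log^2 n+M)/p)$. A single Chebyshev bound on $S$, not one per stage, then gives $\Pp[W>C_3\kbar\log n]=o(1)$ uniformly over typical graphs (\cref{thm:workload}).

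Alternatively, your truncation idea can be rescued by stopping when the cumulative workload $W_t=\sum_{s\le t}Z_s$ first exceeds $C_3\kbar\log n$, rather than stopping on $Z_t$. Before stopping, $Z_t^2\le C_3\kbar\log n\cdot Z_t$, so the first-moment bound alone controls the second moment of the stopped sum of the $\Delta_t$. Summing the recursion up to the stopping time shows that stopping forces this stopped sum above $\tfrac{C_0}{12}\kbar\log n$, which Chebyshev makes unlikely.
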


By~\cref{lem:structure}, $G\in\cT(\epsilon_n)$ with probability $1-o(1)$. It therefore suffices to prove the required guarantees uniformly over graphs in this set. For the intermediate analysis, we fix such a graph $G \in\cT(\epsilon_n)$, so the only randomness comes from the affine families, and each query outcome is determined by the family used at its stage. In~\cref{sec:stage}, the symbols $\Pp,\E[\cdot],\Var$, and $\Cov$ refer to a single family, as in~\cref{sec:dist}; in~\cref{sec:global}, they refer to all stage families jointly. We combine these bounds with~\cref{lem:structure} in~\cref{sec:main}.

\subsection{One-Stage Survival and Covariance}
\label{sec:stage}

Fix a stage with $g$ blocks and a non-defective pair $e=\{u,v\}$. Let $\cC_e:=\cC_{uv}=\{a_u\ne a_v\}$ be the event that the two blocks have distinct slopes. On this event, $T_e:=T_{uv}$ is their collision point and identifies their shared test; write $r_e^*:=r_{uv}^*$ and $s_e^*:=s_{uv}^*$ for its coordinates. The pair is retained if its slopes are equal or its collision test is positive. Accordingly, define the retention event $\cE_e$ and its indicator $I_e$ by
\begin{equation}
\cE_e:=\cC_e^c\cup\bigl(\cC_e\cap\{Y_{T_e}=1\}\bigr),
\qquad I_e:=\ind\{\cE_e\}.
\label{eq:survival-event}
\end{equation}
Thus $I_e=1$ if $e$ is retained and $I_e=0$ otherwise. The next lemma bounds the retention probability uniformly by a constant small enough to account for the six child pairs generated by each retained pair.

\begin{lemma}
\label{lem:survival}
For every fixed $G\in\cT(\epsilon_n)$ and all sufficiently large $n$, every non-defective pair $e$ at every non-final level and every final round satisfies
\begin{equation}
q_e:=\Pp[\cE_e]\le \frac1{24}.
\label{eq:alpha}
\end{equation}
\end{lemma}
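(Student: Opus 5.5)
We split the retention event as $q_e\le\Pp[\cC_e^c]+\Pp[\cC_e\cap\{Y_{T_e}=1\}]$. The first term is exactly $\Pp[a_u=a_v]=1/p\le 1/(C_1\sqrt{\kbar})$, which is $o(1)$ by~\eqref{eq:p-range} since $\kbar\to\infty$. For the second term we condition on $\cC_e$ and on $T_e=(r,s)$. By~\cref{prop:composition}, $u,v$ lie in the test, and every other block $w$ lies in it independently with probability $1/p$, with a law not depending on $(r,s)$. So it suffices to bound, for the random set $S=\cG_u\cup\cG_v\cup\bigcup_{w\in B}\cG_w$ with $B$ a $1/p$-Bernoulli subset of $[g]\setminus\{u,v\}$, the probability that $S$ contains an edge. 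Since $e$ is non-defective, $\cG_u\cup\cG_v$ contains no edge. Hence any edge in $S$ falls into one of three types: it lies inside some selected $\cG_w$; it joins $\cG_u\cup\cG_v$ to a selected $\cG_w$; or it joins two selected blocks $w,w'$.

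We bound each type by a union bound, using the typical-graph properties of~\cref{def:typical}. For type one, the number of internally defective blocks is at most $H_g\le H_{g_0}\le 3\sqrt{\kbar}$; the monotonicity $H_g\le H_{g_0}$ holds because the partitions are nested. The contribution is therefore at most $3\sqrt{\kbar}/p\le 3/C_1$. For type two, the blocks joined to $u$ or to $v$ by a crossing edge number at most $2\Delta_g^{\times}\le 48\sqrt{\kbar}$ by~\eqref{eq:struct-cross}, giving at most $48/C_1$. For type three, each edge between two distinct blocks other than $u,v$ needs both of its blocks selected, which has probability $1/p^2$. There are at most $|E|\le 2\kbar$ such edges, so the contribution is at most $2\kbar/p^2\le 2/C_1^2$. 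Altogether,
\[
q_e\le \frac1p+\frac{3}{C_1}+\frac{48}{C_1}+\frac{2}{C_1^2},
\]
which is far below $1/24$ for $C_1=2000$ and all large $n$.

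This bound is uniform over levels and final rounds, since it uses only $p$, which is the same at every stage, and the graph bounds. At the singleton level we have $H_g=0$, and $\Delta_g^{\times}$ is the maximum degree, which is still covered by~\eqref{eq:struct-cross}. The main point needing care is the conditioning step. We average the conditional bound over $(r,s)$, which is legitimate by~\cref{prop:composition}(iii), and we must make sure the edges among selected blocks are counted with the product $1/p^2$. This uses the independence of memberships from~\cref{prop:composition}(ii). No step looks genuinely hard; the one thing to verify is that $H_g$ is nonincreasing in $g$ under refinement.
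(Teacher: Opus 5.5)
Your proposal is correct and follows essentially the same argument as the paper. Both split off the equal-slope term $1/p$, condition on the collision point via \cref{prop:composition}, and bound three kinds of witnessing edge: internal to a selected block, crossing from $e$ to a selected block, and joining two selected blocks, with bounds $3\sqrt{\kbar}/p$, $48\sqrt{\kbar}/p$ and $2\kbar/p^2$ from the typical-graph conditions.
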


\begin{proof}
The slopes of the two blocks are independent and uniform, whence $\Pp[\mathcal{C}_e^c]=1/p$ and
\begin{equation}
q_e=\frac1p+\left(1-\frac1p\right)\rho_e,
\qquad
\rho_e:=\Pp[Y_{T_e}=1\mid\mathcal{C}_e].
\label{eq:q-decomp}
\end{equation}
By~\cref{prop:composition}, the conditional positivity probability given $\mathcal{C}_e$ and $T_e=(r,s)$ is the same for every admissible $(r,s)$ and hence equals $\rho_e$. Fix one such point and let
\[
\mathcal{B}:=\{w\in[g]\setminus e:\ w\in T_e\}
\]
denote the set of non-special blocks belonging to the test, so that the queried vertex set is $\bigcup_{x\in e\cup\mathcal{B}}\mathcal{G}_x$. By~\cref{prop:composition}(ii), the blocks of $[g]\setminus e$ lie in $\mathcal{B}$ independently with probability $1/p$.

Since $e$ is non-defective, neither block of $e$ contains an internal edge and no edge joins the two blocks. Every edge witnessing a positive outcome therefore meets at least one block of $\mathcal{B}$, and its two endpoint blocks determine which of the following three events occurs:
\begin{align}
\mathcal{F}_1&:=\{\exists\,w\in\mathcal{B}:\ \mathcal{G}_w\text{ contains an edge}\},
\label{eq:event-A1}\\
\mathcal{F}_2&:=\{\exists\,w\in\mathcal{B},\ x\in e:\ \text{some edge joins }\mathcal{G}_w\text{ and }\mathcal{G}_x\},
\label{eq:event-A2}\\
\mathcal{F}_3&:=\{\exists\,w,w'\in\mathcal{B},\ w\ne w':\ \text{some edge joins }\mathcal{G}_w\text{ and }\mathcal{G}_{w'}\}.
\label{eq:event-A3}
\end{align}
These cases exhaust the possibilities, an edge whose endpoint blocks both lie in $e$ being excluded by non-defectiveness. Consequently,
\begin{equation}
\{Y_{T_e}=1\}\subseteq\mathcal{F}_1\cup\mathcal{F}_2\cup\mathcal{F}_3 .
\label{eq:event-cover}
\end{equation}
Each event is now bounded separately, all three estimates being uniform over the admissible collision points.

\emph{The event $\mathcal{F}_1$.}
Distinct internally defective blocks contain distinct internal edges, so that the number of blocks $w$ with $\mathcal{G}_w$ containing an edge is at most $H_g$; by nestedness and~\eqref{eq:struct-internal}, this is at most $H_{g_0}\le3\sqrt{\kbar}$, and at the singleton level it is zero. Each such block lies in $\mathcal{B}$ with probability $1/p$, whence
\begin{equation}
\Pp[\mathcal{F}_1]\le\frac{3\sqrt{\kbar}}p .
\label{eq:bound-A1}
\end{equation}

\emph{The event $\mathcal{F}_2$.}
A block $w$ joined to $\mathcal{G}_u$ or to $\mathcal{G}_v$ requires an edge crossing the boundary of $\mathcal{G}_u$ or of $\mathcal{G}_v$, and distinct such blocks require distinct crossing edges; there are therefore at most $2\Delta_g^{\times}\le48\sqrt{\kbar}$ of them by~\eqref{eq:struct-cross}. The same union bound gives
\begin{equation}
\Pp[\mathcal{F}_2]\le\frac{48\sqrt{\kbar}}p .
\label{eq:bound-A2}
\end{equation}

\emph{The event $\mathcal{F}_3$.}
An edge witnesses $\mathcal{F}_3$ only if its two endpoint blocks are distinct, lie outside $e$, and both belong to $\mathcal{B}$, which by independence has probability $1/p^2$. Since at most $|E|\le2\kbar$ edges qualify, by~\eqref{eq:struct-edges},
\begin{equation}
\Pp[\mathcal{F}_3]\le\frac{2\kbar}{p^2}.
\label{eq:bound-A3}
\end{equation}

Combining~\eqref{eq:event-cover} with~\eqref{eq:bound-A1}--\eqref{eq:bound-A3} yields
\[
\rho_e\le\frac{51\sqrt{\kbar}}p+\frac{2\kbar}{p^2}
\le\frac{51}{C_1}+\frac2{C_1^2},
\]
the last inequality following from $p\ge C_1\sqrt{\kbar}$. For sufficiently large $n$ one has $\kbar\ge1$ and hence $p\ge C_1$. Substitution into~\eqref{eq:q-decomp} yields
\[
q_e\le\frac1p+\rho_e
\le\frac{52}{C_1}+\frac2{C_1^2}
\le\frac1{24}.
\qedhere
\]
\end{proof}

\paragraph{Covariance within one family.}
Distinct survival indicators are generated by the same affine family and need not be independent. Their covariance is bounded from above by separating the two collision tests. For every non-defective pair $f$, the preceding proof together with~\cref{prop:composition} gives
\begin{equation}
\rho_f
=\Pp[Y_{T_f}=1\mid\cC_f,\ T_f=(r',s')]
\le q_f
\label{eq:rho-le-q}
\end{equation}
for every admissible collision point. Equivalently, $\rho_f$ is the positivity probability of a test that contains $f$ and includes every other block independently with probability $1/p$.

The following event requires the two collision first coordinates to be distinct and excludes the unshared blocks of each pair from the test of the other pair.

\begin{definition}
\label{def:separation}
For distinct non-defective pairs $e,f$, define
\begin{equation}
\Omega_{ef}:=
\cC_e\cap\cC_f\cap\{r_e^*\ne r_f^*\}
\cap\bigcap_{w\in e\setminus f}\{w\notin T_f\}
\cap\bigcap_{w\in f\setminus e}\{w\notin T_e\}.
\label{eq:sep-event}
\end{equation}
\end{definition}

The event $\Omega_{ef}$ requires distinct slopes within each pair, different first coordinates for the two collision points, and exclusion of every block belonging to only one pair from the other pair's test. If $e$ and $f$ share a block, that block belongs to both collision tests and is not subject to exclusion. The next proposition bounds $\Pp[\Omega_{ef}^c]$, the probability that at least one of these conditions fails. Here $e\triangle f:=(e\setminus f)\cup(f\setminus e)$ is the set of blocks belonging to exactly one of the two pairs.

\begin{proposition}
\label{prop:sep-prob}
For any two distinct non-defective pairs $e,f$, the separation event $\Omega_{ef}$ defined in~\eqref{eq:sep-event} satisfies
\begin{equation}
\Pp[\Omega_{ef}^c]
\le\frac{3+|e\triangle f|}{p}\le\frac7p.
\label{eq:omega-bad}
\end{equation}
\end{proposition}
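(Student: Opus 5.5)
The plan is a union bound over the ways $\Omega_{ef}$ can fail. Its complement is contained in
\[
\cC_e^c\;\cup\;\cC_f^c\;\cup\;\bigl(\cC_e\cap\cC_f\cap\{r_e^*=r_f^*\}\bigr)\;\cup\bigcup_{w\in e\setminus f}\bigl(\cC_f\cap\{w\in T_f\}\bigr)\;\cup\bigcup_{w\in f\setminus e}\bigl(\cC_e\cap\{w\in T_e\}\bigr).
\]
The convention $\{w\in\dagger\}=\varnothing$ lets us intersect the membership failures with the collision events at no cost. The target is to show that each of the $3+|e\triangle f|$ terms has probability at most $1/p$.

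\textbf{The first three terms.}
\begin{itemize}
\item Since the two slopes of a pair are independent and uniform, $\Pp[\cC_e^c]=\Pp[\cC_f^c]=1/p$.
\item If $e$ and $f$ are disjoint, \cref{prop:first_cordinate}(ii) bounds the coincidence term by $1/p$.
\item If they share a block, $e=\{u,v\}$ and $f=\{u,v'\}$ with $v\ne v'$, and \cref{prop:first_cordinate}(iii) gives the same bound. The proposition's closing sentence is exactly this statement.
\end{itemize}

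\textbf{The membership terms.} Take $w\in e\setminus f$; then $w\notin f$. Conditioning on $\cC_f$ and on $T_f=(r,s)$, \cref{prop:composition}(ii) says $w$ belongs to the test $(r,s)$ with probability exactly $1/p$. Averaging over admissible $(r,s)$ gives
\[
\Pp[\cC_f\cap\{w\in T_f\}]=\Pp[\cC_f]/p\le 1/p.
\]
The case $w\in f\setminus e$ is symmetric, with the roles of $e$ and $f$ swapped. Summing all the terms gives $(3+|e\triangle f|)/p$.

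\textbf{The final bound.} Two distinct pairs satisfy $|e\triangle f|\le 4$, with equality when they are disjoint and $|e\triangle f|=2$ when they share a block. This yields the bound $7/p$.

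The only point needing care is the membership terms. The block $w$ may be unrelated to $f$, but its own coefficients are independent of those of $f$. \cref{prop:composition} already packages this independence, so I expect no real obstacle beyond keeping the shared-block case straight in the coincidence term. That case is the reason part (iii) of \cref{prop:first_cordinate} conditions on the shared coefficients.
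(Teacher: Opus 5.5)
Your proposal is correct and follows essentially the same route as the paper. You cover $\Omega_{ef}^c$ by the two equal-slope events, the first-coordinate coincidence event, and the $|e\triangle f|$ membership events; your explicit intersection with $\cC_f$ or $\cC_e$ simply makes visible the paper's convention $\{w\in\dagger\}=\varnothing$. You then bound each event by $1/p$ using \cref{prop:composition} and \cref{prop:first_cordinate}(ii)--(iii), and conclude with a union bound and $|e\triangle f|\le 4$, exactly as the paper does.
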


On the separation event, the joint survival probability admits the following product bound, whose proof conditions explicitly on the special coefficients and on the complete membership pattern of the first test.

\begin{proposition}
\label{prop:product}
For any two distinct non-defective pairs $e,f$, with retention events defined in~\eqref{eq:survival-event} and separation event defined in~\eqref{eq:sep-event}, we have
\begin{equation}
\Pp[\cE_e\cap\cE_f\cap\Omega_{ef}]\le q_eq_f.
\label{eq:product-bound}
\end{equation}
\end{proposition}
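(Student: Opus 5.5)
The plan is to condition on everything that fixes the first collision test, and then show that the second test behaves like a fresh, slightly thinned copy of its unconditional law. Let $K:=e\cup f$ be the set of \emph{special} blocks; it has three or four elements. Write $\mathcal{S}$ for the $\sigma$-field generated by the coefficient pairs $(a_w,b_w)_{w\in K}$. The event $\cC_e\cap\cC_f\cap\{r_e^*\ne r_f^*\}$ is $\mathcal{S}$-measurable, and so are the collision points $T_e=(r,s)$ and $T_f=(r',s')$. The exclusion conditions in~\eqref{eq:sep-event} are also $\mathcal{S}$-measurable, because they only concern blocks of $e\triangle f\subseteq K$. So $\Omega_{ef}\in\mathcal{S}$. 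On $\Omega_{ef}$ both collisions exist, hence $\cE_e=\{Y_{T_e}=1\}$ and $\cE_f=\{Y_{T_f}=1\}$.

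\emph{Step 1: conditioning on the first test.} Fix a realization of the special coefficients lying in $\Omega_{ef}$, so that $r\ne r'$. Further condition on the membership pattern $\{\ind\{P_w(r)=s\}\}_{w\notin K}$, which together with the special coefficients determines $Y_{T_e}$. By \cref{prop:two-point}, for each non-special block $w$ the pair $(P_w(r),P_w(r'))$ is uniform on $\F_p^2$. Hence $P_w(r')$ is uniform and independent of $P_w(r)$, and these are independent across $w$ because the coefficient pairs are independent. Therefore, given $\mathcal{S}$ and the whole membership pattern of $T_e$, each non-special block lies in $T_f$ independently with probability $1/p$. Among the special blocks, $T_f$ contains exactly the blocks of $f$: blocks of $e\setminus f$ are excluded on $\Omega_{ef}$, and a shared block, if any, lies in $f$.

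\emph{Step 2: monotone comparison.} Compare this conditional law of the vertex set of $T_f$ with the law underlying $\rho_f$ in~\eqref{eq:rho-le-q}. There, $f$ is present and every other block, including the blocks of $e\setminus f$, is included independently with probability $1/p$. Couple the two by adding the blocks of $e\setminus f$ independently with probability $1/p$. Since edge-detecting queries are monotone, the conditional probability that $Y_{T_f}=1$ is at most $\rho_f$, and by~\eqref{eq:rho-le-q} this is at most $q_f$. This step is where I expect the main care is needed. One has to check that conditioning on the first test's pattern leaves no residual dependence, and this is exactly what $r\ne r'$ and \cref{prop:two-point} provide. One also has to check that forcing the $e\setminus f$ blocks out only shrinks the test, which holds because the pairs are non-defective and queries are monotone.

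\emph{Step 3: integrating out.} From Steps 1 and 2,
\[
\Pp[\cE_e\cap\cE_f\cap\Omega_{ef}]
\le q_f\,\Pp\bigl[\Omega_{ef}\cap\{Y_{T_e}=1\}\bigr]
\le q_f\,\Pp\bigl[\cC_e\cap\{Y_{T_e}=1\}\bigr].
\]
By~\eqref{eq:q-decomp}, $\Pp[\cC_e\cap\{Y_{T_e}=1\}]=(1-1/p)\rho_e\le q_e$, which gives~\eqref{eq:product-bound}. Only the bound $\Pp[\cdot]\le\Pp[\cC_e\cap\{Y_{T_e}=1\}]$ is used for the first test, so no independence claim about $T_e$ under the extra conditions is needed.
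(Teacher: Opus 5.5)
Your proof is correct and follows essentially the same route as the paper. You partition $\Omega_{ef}$ by the special coefficients together with the first test's membership pattern on the non-special blocks. You then use \cref{prop:two-point} with $r\ne r'$ to show that the second test still includes those blocks independently with probability $1/p$. Finally, you dominate the conditional retention probability of $f$ by $\rho_f\le q_f$ via monotonicity after re-adding the excluded blocks of $e\setminus f$.

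The differences are cosmetic. You bound the first factor through $\Pp[\cC_e\cap\{Y_{T_e}=1\}]=(1-1/p)\rho_e\le q_e$ rather than directly by $\Pp[\cE_e\cap\Omega_{ef}]\le q_e$. Also, your remark that non-defectiveness is needed for the monotone comparison is unnecessary: monotonicity of edge-detecting queries alone suffices there.
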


Both propositions are proved in~\cref{app:covariance}. Taken together, they yield at once the covariance estimate required by the second-moment analysis.

\begin{lemma}
\label{lem:covariance}
For distinct unordered non-defective pairs $e,f$ processed under one affine family, one has
\begin{equation}
\Cov(I_e,I_f)\le\frac7p \, ,
\label{eq:cov}
\end{equation}
where $I_e:=\ind\{\cE_e\}.$
\end{lemma}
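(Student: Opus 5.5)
The covariance bound follows from the two preceding propositions by splitting the joint survival event along the separation event $\Omega_{ef}$. Write
\[
\E[I_eI_f]=\Pp[\cE_e\cap\cE_f]
=\Pp[\cE_e\cap\cE_f\cap\Omega_{ef}]+\Pp[\cE_e\cap\cE_f\cap\Omega_{ef}^c].
\]
The first term is at most $q_eq_f$ by~\cref{prop:product}. The second term is at most $\Pp[\Omega_{ef}^c]\le 7/p$ by~\cref{prop:sep-prob}. Subtracting $\E[I_e]\E[I_f]=q_eq_f$ then gives
\[
\Cov(I_e,I_f)=\E[I_eI_f]-q_eq_f\le 7/p.
\]

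The only points to check are bookkeeping. First, both propositions must apply to an arbitrary pair of distinct non-defective pairs $e,f$ under one affine family. This includes the case $|e\cap f|=1$, where $|e\triangle f|=2$, and the disjoint case, where $|e\triangle f|=4$. In both cases $3+|e\triangle f|\le 7$, so the uniform constant $7/p$ covers each case. Second, $q_e$ in~\eqref{eq:alpha} is exactly $\E[I_e]$ for the same family.

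Since the lemma is a direct combination of the two propositions, I expect no real obstacle here. The substantive work lies in those propositions, which the paper defers to the appendix. The proof is therefore a short two-line inequality chain.
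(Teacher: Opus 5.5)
Your proposal is correct and follows the paper's proof: you split $\Pp[\cE_e\cap\cE_f]$ along $\Omega_{ef}$, bound the separated part by $q_eq_f$ via \cref{prop:product} and the remainder by $\Pp[\Omega_{ef}^c]\le 7/p$ via \cref{prop:sep-prob}, and then subtract $q_eq_f$. Your check that $3+|e\triangle f|\le 7$ holds in both the shared-block and the disjoint case is also the right one.
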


\begin{proof}
Since $I_e$ and $I_f$ are indicators with means $q_e$ and $q_f$, respectively. We have
\[
\Cov(I_e,I_f)=\Pp[\cE_e\cap\cE_f]-q_eq_f.
\]
We split the joint event according to whether $\Omega_{ef}$ occurs. By~\cref{prop:product,prop:sep-prob}, one has
\begin{align*}
\Pp[\cE_e\cap\cE_f]
\le \Pp[\cE_e\cap\cE_f\cap\Omega_{ef}]+\Pp[\Omega_{ef}^c]
\le q_eq_f+\frac7p.
\end{align*}
Subtracting $q_eq_f$ gives $\Cov(I_e,I_f)\le 7/p$, as required.
\end{proof}

The one-stage analysis concludes by summing the individual survival indicators. The candidate collection appearing in the next statement is deterministic relative to the fresh family.

\begin{corollary}
\label{cor:moments}
Let $\cN$ be a deterministic collection of non-defective pairs, and let $X:=\sum_{e\in\cN}I_e$ under a fresh affine family. Then
\begin{align}
\E[X]&\le\frac{|\cN|}{24},
\label{eq:stage-mean}\\
\Var(X)&\le|\cN|+\frac7p|\cN|^2.
\label{eq:stage-var}
\end{align}
\end{corollary}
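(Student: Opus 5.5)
The corollary follows directly from \cref{lem:survival} and \cref{lem:covariance}, using linearity of expectation for the mean and the covariance expansion of a sum of indicators for the variance. Because $\cN$ is fixed before the fresh affine family is sampled, each $I_e$ with $e\in\cN$ is exactly the retention indicator of \eqref{eq:survival-event} for that family. Hence the one-family statements of \cref{sec:stage} apply verbatim, with no conditioning subtleties.

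For \eqref{eq:stage-mean}, write
\[
\E[X]=\sum_{e\in\cN}\Pp[\cE_e]=\sum_{e\in\cN}q_e.
\]
Each $e\in\cN$ is non-defective, so \cref{lem:survival} gives $q_e\le 1/24$. Summing over $e\in\cN$ yields $\E[X]\le|\cN|/24$.

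For \eqref{eq:stage-var}, expand
\[
\Var(X)=\sum_{e\in\cN}\Var(I_e)+\sum_{\substack{e,f\in\cN\\ e\ne f}}\Cov(I_e,I_f),
\]
where the second sum runs over ordered pairs of distinct elements. The diagonal terms satisfy $\Var(I_e)=q_e(1-q_e)\le 1$, so they contribute at most $|\cN|$. For the off-diagonal terms, the elements of $\cN$ are distinct unordered non-defective pairs, so \cref{lem:covariance} gives $\Cov(I_e,I_f)\le 7/p$ for each ordered pair with $e\ne f$. There are $|\cN|(|\cN|-1)\le|\cN|^2$ such ordered pairs, so they contribute at most $7|\cN|^2/p$. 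Adding the two contributions gives \eqref{eq:stage-var}.

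There is no real obstacle here; the only point to check is that \cref{lem:covariance} is an upper bound valid for every distinct pair, including pairs that share a block. Since only upper bounds on the covariances are needed, the possibly negative covariances can simply be bounded above. All the substantive work lies in \cref{prop:sep-prob,prop:product}, which are taken as given.
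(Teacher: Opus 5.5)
Your proof is correct and follows the paper's argument exactly. The mean bound comes from linearity of expectation and \cref{lem:survival}; for the variance, the covariance expansion has diagonal terms at most one and $|\cN|(|\cN|-1)\le|\cN|^2$ ordered off-diagonal terms, each bounded by $7/p$ via \cref{lem:covariance}.
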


\begin{proof}
The mean bound is immediate from~\cref{lem:survival}. For the variance, expand
\[
\Var(X)=\sum_{e\in\cN}\Var(I_e)
+\sum_{\substack{e,f\in\cN\\e\ne f}}\Cov(I_e,I_f).
\]
Each diagonal term is at most one, and each off-diagonal term is at most $7/p$ by~\cref{lem:covariance}. Since there are $|\cN|(|\cN|-1)\le|\cN|^2$ ordered off-diagonal pairs, the bound~\eqref{eq:stage-var} follows.
\end{proof}

\subsection{Global Analysis of the Candidate Count}
\label{sec:global}

The stages are now combined so as to bound the total candidate workload. The argument rests on the aggregate defective-pair estimate~\eqref{eq:sum-Dg}, on the one-stage moment bounds, and on the vanishing cross moments of the centered stage errors. No separate bound on the candidate set is imposed at any individual level.

Recall that $M=L+R$ and that $\PD_t$ denotes the candidate set entering global stage $t$, with $\PD_M=\Ehat$. For $0 \le t \le M$, define 
\begin{equation}
Z_t:=|\PD_t| \, , \text{ and }
W:=\sum_{t=0}^{M}Z_t.
\label{eq:workload-def}
\end{equation}




For $0\le t<M$, let $\mathcal A_t$ denote the collection of affine coefficients used at stage $t$. These collections are mutually independent. Since the graph is fixed and the decoder is deterministic, $\PD_t$ and $Z_t$ depend only on $\mathcal A_0,\ldots,\mathcal A_{t-1}$.

We write $\Pp_t$, $\E_t[\cdot]$, and $\Var_t$ for probability, expectation, and variance conditional on these earlier families. For a random variable $Y$ determined by $\mathcal A_0,\ldots,\mathcal A_t$, independence allows us to compute
\[
\E_t[Y]=\E[Y\mid\mathcal A_0,\ldots,\mathcal A_{t-1}]
\]
by averaging over $\mathcal A_t$ alone. At $t=0$, there are no earlier families to condition on. The law of total expectation gives
\begin{equation}
\E[Y]=\E[\E_t[Y]].
\label{eq:successive-averaging}
\end{equation}

Let $X_t$ count the non-defective candidates in $\PD_t$ that are retained at stage $t$. The deterministic upper bounds on the defective candidates are
\begin{equation}
\beta_t:=
\begin{cases}
D_{g_t},&0\le t<L,\\
|E|,&L\le t<M,
\end{cases}
\qquad
A:=\sum_{t=0}^{M-1}\beta_t,
\label{eq:beta-A}
\end{equation}
the second case being justified by the fact that the defective singleton pairs at a final round are precisely the edges. For every fixed typical graph,~\eqref{eq:sum-Dg} gives
\begin{equation}
A=\sum_{j=0}^{L-1}D_{g_j}+R|E|
=\bigO(\kbar\log n),
\qquad
Z_0=\binom{g_0}2<2\kbar \, .
\label{eq:A-Z0}
\end{equation}
Here the implicit constant is absolute.

\paragraph{One-step recursion.}
At a non-final stage, at most $\beta_t$ defective candidates and $X_t$ non-defective candidates are retained, and each retained pair generates at most six children; at a final round no children are generated, so that the same upper bound remains valid. Therefore,
\begin{equation}
Z_{t+1}\le6(\beta_t+X_t),
\qquad 0\le t<M.
\label{eq:one-step-global}
\end{equation}
Fix the affine coefficients of all stages before $t$, and let $\cN_t\subseteq\PD_t$ be the set of non-defective candidates. This set is then deterministic. Since $|\cN_t|\le Z_t$,~\cref{cor:moments} yields
\begin{align}
m_t:=\E_t[X_t]&\le\frac{Z_t}{24},
\label{eq:stagewise-mean-global}\\
\Var_t(X_t)&\le Z_t+\frac7pZ_t^2.
\label{eq:stagewise-var-global}
\end{align}

Define $\Delta_t:=X_t-m_t$, the difference between $X_t$ and its mean over the current affine family. With the affine coefficients of all earlier stages fixed, we have
\begin{equation}
\E_t[\Delta_t]=0,
\qquad
\E_t[\Delta_t^2]=\Var_t(X_t)
\le Z_t+\frac7pZ_t^2.
\label{eq:centered-stage}
\end{equation}
Using $X_t=m_t+\Delta_t$ and $m_t\le Z_t/24$ in~\eqref{eq:one-step-global} gives
\begin{equation}
Z_{t+1}\le6\beta_t+\frac14Z_t+6\Delta_t.
\label{eq:recursion-centered}
\end{equation}
With the earlier affine coefficients fixed, non-defective candidates contribute at most $Z_t/4$ candidates to the next stage in expectation. We first bound the sum of the expected candidate counts using the one-stage mean bound.

\begin{lemma}
\label{lem:first-moment}
For every fixed $G\in\cT(\epsilon_n)$ and all sufficiently large $n$, we have 
\begin{equation}
\sum_{t=0}^{M}\E[Z_t]
\le\frac43(Z_0+6A)
=\bigO(\kbar\log n) \, ,
\label{eq:first-sum}
\end{equation}
where $Z_t$ and $A$ are defined in~\eqref{eq:workload-def} and~\eqref{eq:beta-A}, respectively. 
\end{lemma}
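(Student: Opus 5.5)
The plan is to take expectations in the centered recursion~\eqref{eq:recursion-centered} and then sum the resulting linear inequality over all stages. For each $0\le t<M$, apply the successive-averaging identity~\eqref{eq:successive-averaging} to $Z_{t+1}$, which is determined by $\mathcal A_0,\ldots,\mathcal A_t$. Because $\beta_t$ is deterministic for the fixed typical graph, $Z_t$ depends only on earlier families, and $\E_t[\Delta_t]=0$ by~\eqref{eq:centered-stage}, averaging~\eqref{eq:recursion-centered} over $\mathcal A_t$ and then over the earlier families should give
\[
\E[Z_{t+1}]\le 6\beta_t+\tfrac14\E[Z_t],\qquad 0\le t<M.
\]
Equivalently, one can average~\eqref{eq:one-step-global} directly and use $\E_t[X_t]=m_t\le Z_t/24$ from~\eqref{eq:stagewise-mean-global}. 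Here I need to confirm that $X_t$ is integrable. This is immediate, since $X_t\le Z_t$ and every $Z_t$ is bounded by the finite number of pairs at its level.

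Next I sum this inequality over $t=0,\ldots,M-1$. Writing $\Sigma:=\sum_{t=0}^{M}\E[Z_t]$, this gives
\[
\Sigma-\E[Z_0]=\sum_{t=1}^{M}\E[Z_t]\le 6A+\tfrac14\sum_{t=0}^{M-1}\E[Z_t]\le 6A+\tfrac14\Sigma.
\]
Since $Z_0$ is deterministic, rearranging yields $\tfrac34\Sigma\le Z_0+6A$, which is exactly the constant $\tfrac43(Z_0+6A)$ in~\eqref{eq:first-sum}. To avoid any concern about subtracting $\tfrac14\Sigma$, I would instead unroll the recursion as $\E[Z_t]\le 4^{-t}Z_0+6\sum_{s<t}4^{-(t-1-s)}\beta_s$ and sum the geometric series. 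This gives the same bound with no cancellation issue.

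Finally, I would insert~\eqref{eq:A-Z0}: $Z_0<2\kbar$ and $A=\bigO(\kbar\log n)$ with an absolute constant. The latter comes from~\eqref{eq:sum-Dg}, which contributes $5\kbar\log n$, together with $R|E|\le 2\kbar\log n$. Together these give $\bigO(\kbar\log n)$, and since $\kbar\to\infty$ and $\log n\ge1$ the $Z_0$ term is absorbed.

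There is no real obstacle here. The only point needing care is justifying that $\Delta_t$ has conditional mean zero while $Z_t$ and $\beta_t$ can be pulled outside the conditional expectation, which rests on the independence of the families and the measurability of $\PD_t$ with respect to $\mathcal A_0,\ldots,\mathcal A_{t-1}$. For the same reason, the bound on $m_t$ must hold pointwise for every realization of the earlier families, and it does by~\cref{cor:moments} applied to the deterministic set $\cN_t$.
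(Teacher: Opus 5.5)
Your proposal is correct and follows the paper's proof essentially step for step: average \eqref{eq:recursion-centered} over $\mathcal A_t$ and then over the earlier families to get $\E[Z_{t+1}]\le 6\beta_t+\frac14\E[Z_t]$, sum over $0\le t<M$, rearrange while dropping $\E[Z_M]\ge 0$, and invoke \eqref{eq:A-Z0}. The unrolled geometric-series variant you mention gives the same constant $\frac43(Z_0+6A)$, so it is an equally valid way to finish.
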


\begin{proof}
Fix the affine coefficients of all stages before $t$ and take expectation over $\mathcal A_t$ in~\eqref{eq:recursion-centered}. Since $\beta_t$ and $Z_t$ are fixed and $\E_t[\Delta_t]=0$, we obtain
\[
\E_t[Z_{t+1}]\le6\beta_t+\frac14Z_t.
\]
Averaging over the earlier families and using~\eqref{eq:successive-averaging} gives
\begin{equation}
\E[Z_{t+1}]\le6\beta_t+\frac14\E[Z_t].
\label{eq:first-step-global}
\end{equation}
Let $B_1:=\sum_{t=0}^{M}\E[Z_t]$. Summing~\eqref{eq:first-step-global} over $0\le t<M$ gives
\[
B_1-Z_0\le6A+\frac14(B_1-\E[Z_M]).
\]
Since $\E[Z_M]\ge0$, we have
\[
\frac34B_1\le Z_0+6A.
\]
Therefore $B_1\le\frac43(Z_0+6A)$, and~\eqref{eq:A-Z0} gives the stated bound.
\end{proof}

The variance bound in~\eqref{eq:centered-stage} involves $Z_t^2$. We therefore also bound the sum of the second moments of the candidate counts.

\begin{lemma}
\label{lem:L2}
There is an absolute constant $C_2$ such that, for every fixed typical graph and all sufficiently large $n$,
\begin{equation}
\sum_{t=0}^{M}\E[Z_t^2]
\le C_2(Z_0^2+A^2+M).
\label{eq:L2-sum}
\end{equation}
\end{lemma}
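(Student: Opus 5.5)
We square the centered recursion~\eqref{eq:recursion-centered} and take conditional expectations, exactly as in the proof of~\cref{lem:first-moment}. The only new feature is the quadratic variance term, which must be absorbed into the contraction factor.

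\emph{Step 1: squared recursion.} Since $Z_{t+1}\ge0$ and the right-hand side $6\beta_t+\frac14Z_t+6\Delta_t$ dominates it, we may square. Fixing $\mathcal A_0,\ldots,\mathcal A_{t-1}$ and averaging over $\mathcal A_t$, the cross terms involving $\Delta_t$ vanish because $\E_t[\Delta_t]=0$ and $\beta_t,Z_t$ are fixed. This gives
\[
\E_t[Z_{t+1}^2]\le\Bigl(6\beta_t+\tfrac14Z_t\Bigr)^2+36\,\E_t[\Delta_t^2]
\le\Bigl(6\beta_t+\tfrac14Z_t\Bigr)^2+36Z_t+\frac{252}{p}Z_t^2 ,
\]
using~\eqref{eq:centered-stage}.

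\emph{Step 2: absorb the variance terms.} Use $(x+y)^2\le(1+\eta^{-1})x^2+(1+\eta)y^2$ with $\eta=1$ to get
\[
\Bigl(6\beta_t+\tfrac14Z_t\Bigr)^2\le72\beta_t^2+\tfrac18Z_t^2 .
\]
Next, bound $36Z_t\le \frac1{16}Z_t^2+C$ via $36z\le z^2/16+36^2\cdot4$. The term $\frac{252}{p}Z_t^2$ is at most $\frac1{16}Z_t^2$ once $p\ge 252\cdot16$, which holds for large $n$ since $p\ge C_1\sqrt{\kbar}\to\infty$. Alternatively, because $C_1=2000$, we could simply require $\kbar$ large. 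Altogether,
\[
\E_t[Z_{t+1}^2]\le72\beta_t^2+\tfrac14Z_t^2+C .
\]
Taking full expectations via~\eqref{eq:successive-averaging} yields
\[
\E[Z_{t+1}^2]\le72\beta_t^2+\tfrac14\E[Z_t^2]+C .
\]

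\emph{Step 3: sum over stages.} With $B_2:=\sum_{t=0}^M\E[Z_t^2]$, summing over $0\le t<M$ and dropping $\E[Z_M^2]\ge0$ as in~\cref{lem:first-moment} gives
\[
\tfrac34B_2\le Z_0^2+72\sum_t\beta_t^2+CM .
\]
Since $\beta_t\ge0$, we have $\sum_t\beta_t^2\le(\sum_t\beta_t)^2=A^2$. This gives~\eqref{eq:L2-sum} with $C_2$ absolute.

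The main obstacle is Step 2: the $\frac7pZ_t^2$ covariance contribution must have a coefficient strictly smaller than the remaining slack in the contraction factor, so that the geometric summation still closes. This is where $p\to\infty$ (or the large choice of $C_1$) is used. The linear term $Z_t$ is handled by the additive constant, which is what produces the $M$ term in the bound.
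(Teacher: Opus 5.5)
Your proposal is correct and follows essentially the paper's route: square the one-step bound, eliminate the cross term using $\E_t[\Delta_t]=0$, absorb both $\frac{252}{p}Z_t^2$ (via $p\to\infty$) and $36Z_t$ (via a quadratic inequality, which produces the additive constant and hence the $M$ term) into a contraction factor, then sum and use $\sum_t\beta_t^2\le A^2$. The only cosmetic difference is that you square the centered recursion after inserting $m_t\le Z_t/24$, whereas the paper squares $6(\beta_t+X_t)$ directly; this gives you contraction factor $\frac14$ instead of the paper's $\frac12$.
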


\begin{proof}
Fix the affine coefficients of all stages before $t$. Both sides of~\eqref{eq:one-step-global} are nonnegative. Squaring this inequality and taking expectation over $\mathcal A_t$ gives
\begin{align*}
\E_t[Z_{t+1}^2]
&\le36\E_t[(\beta_t+X_t)^2]\\
&=36(\beta_t+m_t)^2+36\Var_t(X_t)\\
&\le36\left(\beta_t+\frac{Z_t}{24}\right)^2
  +36Z_t+\frac{252}{p}Z_t^2\\
&\le72\beta_t^2+
  \left(\frac18+\frac{252}{p}\right)Z_t^2+36Z_t.
\end{align*}
The third line follows from~\eqref{eq:stagewise-mean-global}--\eqref{eq:stagewise-var-global}, and the last line uses $(x+y)^2\le2x^2+2y^2$. Since $p\to\infty$, we have $252/p\le1/8$ for all sufficiently large $n$. Also,
\[
36z\le\frac18z^2+2592\qquad(z\ge0),
\]
which follows from $(z-144)^2\ge0$. Combining these bounds, we obtain
\[
\E_t[Z_{t+1}^2]\le\frac12Z_t^2+C(\beta_t^2+1)
\]
for an absolute constant $C$. Averaging over the earlier families gives
\begin{equation}
\E[Z_{t+1}^2]\le\frac12\E[Z_t^2]+C(\beta_t^2+1).
\label{eq:L2-step-full}
\end{equation}
Let $B_2:=\sum_{t=0}^{M}\E[Z_t^2]$. Summing~\eqref{eq:L2-step-full} over $0\le t<M$ yields
\[
B_2-Z_0^2\le\frac12(B_2-\E[Z_M^2])
+C\left(\sum_{t=0}^{M-1}\beta_t^2+M\right).
\]
Using $\E[Z_M^2]\ge0$ and
\[
\sum_{t=0}^{M-1}\beta_t^2
\le\left(\sum_{t=0}^{M-1}\beta_t\right)^2=A^2,
\]
we obtain
\[
B_2\le2Z_0^2+2C(A^2+M).
\]
Choosing $C_2:=\max\{2,2C\}$ proves the result.
\end{proof}

The errors $\Delta_t$ may be dependent, but each has mean zero when the affine coefficients of all earlier stages are fixed. This makes the cross terms vanish in the second moment of their sum.

\begin{lemma}
\label{lem:orthogonality}
Let $\Delta_t=X_t-m_t$ be the centered stage error, where $m_t=\E_t[X_t]$. For $0\le s<t<M$, we have
\begin{equation}
\E[\Delta_s\Delta_t]=0.
\label{eq:orthogonality}
\end{equation}
Consequently, with $S:=\sum_{t=0}^{M-1}\Delta_t$,
\begin{equation}
\E[S]=0,
\qquad
\E[S^2]=\sum_{t=0}^{M-1}\E[\Delta_t^2].
\label{eq:sum-errors}
\end{equation}
\end{lemma}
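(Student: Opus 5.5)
The plan is the standard martingale-difference argument, using the tower property~\eqref{eq:successive-averaging} together with $\E_t[\Delta_t]=0$ from~\eqref{eq:centered-stage}.

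\emph{Measurability and integrability.} Fix $0\le s<t<M$. By construction, $X_s$ and $m_s=\E_s[X_s]$ depend only on $\mathcal A_0,\ldots,\mathcal A_s$, so $\Delta_s$ is determined by $\mathcal A_0,\ldots,\mathcal A_{t-1}$, because $s\le t-1$. Likewise $\Delta_t$ is determined by $\mathcal A_0,\ldots,\mathcal A_t$. For a fixed graph, every quantity here is bounded: $Z_t\le\binom n2$, hence $|X_t|,|m_t|\le\binom n2$. All products are therefore integrable, and the conditional expectations are finite sums over the finitely many values of the affine coefficients. No convergence issues arise.

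\emph{Orthogonality.} Apply~\eqref{eq:successive-averaging} at stage $t$ to $Y=\Delta_s\Delta_t$, which is determined by $\mathcal A_0,\ldots,\mathcal A_t$. With the earlier families fixed, $\Delta_s$ is a constant, so it factors out of $\E_t$. This gives
\[
\E[\Delta_s\Delta_t]=\E\bigl[\E_t[\Delta_s\Delta_t]\bigr]=\E\bigl[\Delta_s\,\E_t[\Delta_t]\bigr]=0,
\]
by~\eqref{eq:centered-stage}. The only point needing care is that $\E_t$ is computed by averaging over $\mathcal A_t$ alone, which relies on the mutual independence of the families. That independence is exactly what justifies pulling $\Delta_s$ out as a constant.

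\emph{Consequences.} For each $t$, the same identity gives $\E[\Delta_t]=\E[\E_t[\Delta_t]]=0$, and summing over $t$ yields $\E[S]=0$. Expanding the square,
\[
S^2=\sum_t\Delta_t^2+2\sum_{s<t}\Delta_s\Delta_t .
\]
Taking expectations, all cross terms vanish by~\eqref{eq:orthogonality}, which leaves $\E[S^2]=\sum_t\E[\Delta_t^2]$.

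There is no real obstacle. The one step to state explicitly is the measurability of $\Delta_s$ with respect to the earlier families. This holds because $m_s$ is itself a function of $\mathcal A_0,\ldots,\mathcal A_{s-1}$, since it is obtained by averaging over $\mathcal A_s$.
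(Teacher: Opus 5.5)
Your proposal is correct and follows the paper's proof essentially verbatim: fix the families before stage $t$, treat $\Delta_s$ as a constant to get $\E_t[\Delta_s\Delta_t]=\Delta_s\E_t[\Delta_t]=0$, average via the tower property, and expand $S^2$. Your boundedness remark ($Z_t\le\binom n2$) is a harmless addition that the paper leaves implicit.
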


\begin{proof}
Fix $0\le s<t<M$ and the affine coefficients of all stages before $t$. Since $X_s$ and $m_s$ depend only on earlier families, $\Delta_s$ is fixed. Therefore,
\[
\E_t[\Delta_s\Delta_t]=\Delta_s\E_t[\Delta_t]=0.
\]
The product $\Delta_s\Delta_t$ depends only on $\mathcal A_0,\ldots,\mathcal A_t$, so averaging over the earlier families using~\eqref{eq:successive-averaging} gives~\eqref{eq:orthogonality}. Similarly,
\[
\E[\Delta_t]=\E[\E_t[\Delta_t]]=0.
\]
Summing these mean-zero identities gives $\E[S]=0$. Expanding $S^2$ and using~\eqref{eq:orthogonality} gives the second identity in~\eqref{eq:sum-errors}.
\end{proof}

We now combine these moment bounds with Chebyshev's inequality to prove that the total candidate count is $\bigO(\kbar\log n)$ with probability $1-o(1)$.

\begin{theorem}
\label{thm:workload}
There are absolute constants $C_3,C_4>0$ such that, for every fixed typical graph and all sufficiently large $n$, we have
\begin{equation}
\Pp[W>C_3\kbar\log n]
\le C_4\left(
\frac1{\kbar\log n}+\frac1p+
\frac{M}{p\kbar^2\log^2 n}
\right)=o(1).
\label{eq:work-tail}
\end{equation}
\end{theorem}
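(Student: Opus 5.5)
The plan is to sum the centered recursion~\eqref{eq:recursion-centered} over all stages. This gives a pathwise, deterministic bound of $W$ in terms of $Z_0$, $A$, and the accumulated error $S=\sum_{t<M}\Delta_t$. Chebyshev's inequality applied to $S$ then finishes the argument. Summing~\eqref{eq:recursion-centered} over $0\le t<M$ yields
\[
W-Z_0\le 6A+\tfrac14(W-Z_M)+6S\le 6A+\tfrac14 W+6S,
\]
and hence, on every realization,
\[
W\le\tfrac43\bigl(Z_0+6A\bigr)+8S.
\]
By~\eqref{eq:A-Z0}, $\tfrac43(Z_0+6A)\le C'\kbar\log n$ for an absolute $C'$ (using $\log n\ge1$ and $Z_0<2\kbar$). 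Choosing $C_3:=C'+8$, the event $\{W>C_3\kbar\log n\}$ is then contained in $\{S>\kbar\log n\}$.

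Next, by~\cref{lem:orthogonality} we have $\E[S]=0$ and $\E[S^2]=\sum_t\E[\Delta_t^2]$. Chebyshev's inequality gives $\Pp[S>\kbar\log n]\le \E[S^2]/(\kbar\log n)^2$. Applying~\eqref{eq:centered-stage} together with~\eqref{eq:successive-averaging},
\[
\E[S^2]\le\sum_{t<M}\E[Z_t]+\frac7p\sum_{t<M}\E[Z_t^2].
\]
The first sum is bounded by~\cref{lem:first-moment}, which gives $\bigO(\kbar\log n)$. After dividing, it contributes $\bigO(1/(\kbar\log n))$. The second sum is bounded by~\cref{lem:L2}, which gives $C_2(Z_0^2+A^2+M)$. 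Since $Z_0^2+A^2=\bigO(\kbar^2\log^2 n)$, dividing by $(\kbar\log n)^2$ leaves $\bigO(1/p)$ plus the residual $\bigO(M/(p\kbar^2\log^2n))$. These three contributions are exactly the terms in~\eqref{eq:work-tail}.

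Finally, I check that the bound is $o(1)$. We have $\kbar\to\infty$, and $p\ge C_1\sqrt{\kbar}\to\infty$. Also $M=L+R\le2\log n$, so the last term is at most $2/(p\kbar^2\log n)\to0$.

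The only delicate point is that the pathwise summation must not rely on $\Delta_t\ge0$, since $\Delta_t$ can be negative. This causes no trouble, because the inequality is linear in the $\Delta_t$ and only $Z_M\ge0$ is dropped. I expect no real obstacle beyond tracking the absolute constants uniformly over typical graphs, so that $C_3$ and $C_4$ do not depend on $G$. This uniformity holds because every earlier bound used here has absolute constants for all $G\in\cT(\epsilon_n)$.
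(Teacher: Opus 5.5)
Your proposal is correct and matches the paper's proof essentially step for step. Both arguments sum \eqref{eq:recursion-centered} pathwise to obtain $\frac34W\le Z_0+6A+6S$, reduce $\{W>C_3\kbar\log n\}$ to a deviation event for $S$, and apply Chebyshev's inequality together with \cref{lem:orthogonality}, \cref{lem:first-moment}, and \cref{lem:L2}. The only difference is cosmetic: you take $C_3=C'+8$ with threshold $\kbar\log n$ for $S$, whereas the paper takes $C_3=2C_0$ with threshold $\frac{C_0}{12}\kbar\log n$.
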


\begin{proof}
Summing~\eqref{eq:recursion-centered} over $0\le t<M$ gives
\[
W-Z_0\le6A+\frac14(W-Z_M)+6S.
\]
Since $Z_M\ge0$, we obtain
\begin{equation}
\frac34W\le Z_0+6A+6S.
\label{eq:W-decomp}
\end{equation}
By~\eqref{eq:A-Z0}, there is an absolute constant $C_0>0$ such that $Z_0+6A\le C_0\kbar\log n$. Set $C_3:=2C_0$. By~\eqref{eq:W-decomp}, one has
\begin{equation}
\{W>C_3\kbar\log n\}
\subseteq\left\{S>\frac{C_0}{12}\kbar\log n\right\}.
\label{eq:event-inclusion}
\end{equation}

Using~\cref{lem:orthogonality} and averaging the bound in~\eqref{eq:centered-stage}, we obtain
\begin{align}
\E[S^2]
&=\sum_{t=0}^{M-1}\E[\Delta_t^2]\notag\\
&\le\sum_{t=0}^{M-1}\E[Z_t]
+\frac7p\sum_{t=0}^{M-1}\E[Z_t^2]\notag\\
&\le C\left(\kbar\log n+\frac{\kbar^2\log^2 n+M}{p}\right).
\label{eq:S-second-bound}
\end{align}
The last inequality follows from~\cref{lem:first-moment,lem:L2} and~\eqref{eq:A-Z0}, where $C$ is an absolute constant. Since $\E[S]=0$, Chebyshev's inequality and~\eqref{eq:event-inclusion} give
\begin{align*}
\Pp[W>C_3\kbar\log n]
&\le\Pp\!\left[|S|>\frac{C_0}{12}\kbar\log n\right]\\
&\le\frac{144\E[S^2]}{C_0^2\kbar^2\log^2 n}\\
&\le C_4\left(
\frac1{\kbar\log n}+\frac1p+
\frac{M}{p\kbar^2\log^2 n}
\right).
\end{align*}
Each term tends to zero because $p=\Theta(\sqrt{\kbar})$, $M=\bigO(\log n)$, and $\kbar\to\infty$.
\end{proof}

\subsection{Proof of the Main Theorem}
\label{sec:main}

Exact recovery requires that every true edge be retained and every non-edge be removed. We first show that the final rounds remove all non-edges with high probability. We then prove that every true edge is retained and combine the recovery and workload bounds.

\begin{lemma}
\label{lem:no-fp}
Let $\Ehat$ be the output of~\cref{alg:decoder} after $R=\log n$ final rounds. For every fixed typical graph and all sufficiently large $n$, we have
\begin{equation}
\Pp[\Ehat\setminus E\ne\varnothing]
\le n^{2-\log 24}=o(1).
\label{eq:false-positive-tail}
\end{equation}
\end{lemma}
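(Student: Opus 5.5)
We bound the probability that a fixed non-edge survives all $R$ final rounds, and then take a union bound over all non-edges that could possibly reach the output. Fix a typical graph $G$. Any element of $\Ehat$ is a singleton pair $\{x,y\}\in\binom{[n]}2$, and $\Ehat\setminus E\ne\varnothing$ means that some non-defective singleton pair lies in $\PD_{L+R}$. At the singleton level, a pair is defective exactly when it is an edge. Hence
\[
\Pp[\Ehat\setminus E\ne\varnothing]\le\sum_{e\in\binom{[n]}2\setminus E}\Pp[e\in\PD_{L+R}],
\]
and there are fewer than $n^2/2$ terms.

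For a fixed non-edge $e=\{x,y\}$, the event $e\in\PD_{L+R}$ requires two things. First, $e\in\PD_L$. Second, $e$ must be retained in each of the $R$ final rounds, since $\PD_{L+\tau}\subseteq\PD_{L+\tau-1}$ and deletion in round $\tau$ happens exactly on $\cE_e^c$ computed with family $[\tau]$. We drop the first requirement, which only enlarges the event. The retention events in rounds $\tau=1,\dots,R$ depend on disjoint, mutually independent families $\mathcal A_{L},\dots,\mathcal A_{M-1}$, since the graph is fixed. Moreover, for a fixed pair $e$, the retention event $\cE_e^{[\tau]}$ is a function of family $[\tau]$ alone; it does not depend on which other pairs are candidates. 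So these events are independent. \Cref{lem:survival}, which explicitly covers every final round, gives $\Pp[\cE_e^{[\tau]}]\le 1/24$ for each $\tau$. Therefore
\[
\Pp[e\in\PD_{L+R}]\le\prod_{\tau=1}^{R}\Pp[\cE_e^{[\tau]}]\le 24^{-R}=24^{-\log n}=n^{-\log 24}.
\]

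Summing over fewer than $n^2$ non-edges gives the bound $n^{2-\log 24}$. This tends to zero because $\log 24>4>2$.

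The only point needing care is that the decoder's candidate set for the final rounds is random. That is why we bound the survival event pairwise over all $\binom n2$ singleton pairs rather than over the candidate set. Applying \cref{lem:survival} at the singleton level is legitimate, since there $H_g=0$ and the crossing bound \eqref{eq:struct-cross} with $j=L$ still applies.
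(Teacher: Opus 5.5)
Your proof is correct and essentially matches the paper's: you bound the probability that each fixed non-edge survives the $R$ final rounds by $24^{-R}=n^{-\log 24}$ using \cref{lem:survival}, then take a union bound over fewer than $n^2$ non-edges. The only cosmetic difference is that you drop the event $e\in\PD_L$ and multiply the per-round retention probabilities, which are independent because each $\cE_e^{[\tau]}$ depends only on its own family, whereas the paper iterates the conditional bound $\Pp[\mathcal B_\tau]\le\frac1{24}\Pp[\mathcal B_{\tau-1}]$; both are valid.
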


\begin{proof}
Fix a non-edge $e=\{x,y\}$ and define
\[
\mathcal B_\tau:=\{e\in\PD_{L+\tau}\},
\qquad 0\le\tau\le R.
\]
For $1\le\tau\le R$, fix the affine coefficients of all stages before final round $\tau$. If $e\notin\PD_{L+\tau-1}$, it cannot appear in any later candidate set. Otherwise, $e$ is a non-defective pair of singleton blocks, and the round uses an independent affine family. By~\cref{lem:survival}, its retention probability is at most $1/24$. Since this round is global stage $L+\tau-1$, we have
\[
\E_{L+\tau-1}[\ind\{\mathcal B_\tau\}]
\le\frac1{24}\ind\{\mathcal B_{\tau-1}\}.
\]
Averaging over the earlier families using~\eqref{eq:successive-averaging} gives
\[
\Pp[\mathcal B_\tau]\le\frac1{24}\Pp[\mathcal B_{\tau-1}].
\]
Applying this inequality repeatedly and using $\Pp[\mathcal B_0]\le1$, we obtain
\[
\Pp[e\in\Ehat]\le24^{-R}=n^{-\log24},
\]
where $R=\log n$. A union bound over fewer than $n^2$ non-edges proves~\eqref{eq:false-positive-tail}. This bound tends to zero because $\log 24>2$.
\end{proof}

We now combine this result with edge preservation and the workload bound.

\begin{proof}[Proof of \cref{thm:main}]
Each of the $M=L+R$ affine families uses $p^2$ queries, all chosen before any outcomes are observed. Since $p^2=\Theta(\kbar)$ and $M=\bigO(\log n)$, the design uses $\bigO(\kbar\log n)$ non-adaptive queries. This proves~(i).

To prove edge preservation, fix any graph and any choice of affine coefficients, and let $\{x,y\}\in E$. For all sufficiently large $n$, we have $g_0\ge2$. Thus $\PD_0$ contains a pair of distinct blocks whose union contains both endpoints. If the endpoints lie in different blocks, take those two blocks. Otherwise, pair their common block with any other base block.

Every candidate pair whose union contains both endpoints is retained. If its slopes are equal, retention is automatic. Otherwise, its collision test contains the edge $\{x,y\}$ and therefore has positive outcome. At a non-final level, the retained pair generates all six pairs among its four child blocks. At least one generated pair has both endpoints in its union: if the endpoints lie in different child blocks, take those two blocks; otherwise, pair their common child block with another child.

By induction, a candidate pair containing both endpoints in its union exists at every level. At the singleton level, this pair is $\{x,y\}$ itself, and it is retained in every final round. Hence
\[
E\subseteq\Ehat
\]
for every graph and every choice of affine coefficients.

Recovery can therefore fail only if a non-edge survives. The bound in~\cref{lem:no-fp} holds uniformly over $G\in\cT(\epsilon_n)$. Combining it with~\cref{lem:structure} gives
\[
\Pp[\Ehat\ne E]
\le\Pp[G\notin\cT(\epsilon_n)]
+n^{2-\log 24}
=o(1).
\]
This proves~(ii).

Similarly, the uniform bound in~\cref{thm:workload} and~\cref{lem:structure} give
\begin{align*}
\Pp[W>C_3\kbar\log n]
&\le
\Pp[G\notin\cT(\epsilon_n)]
+\sup_{G\in\cT(\epsilon_n)}
\Pp[W>C_3\kbar\log n\mid G]
=o(1).
\end{align*}
The decoder examines $\sum_{t=0}^{M-1}Z_t\le W$ candidate pairs and attempts at most $6\sum_{t=0}^{L-1}Z_t\le6W$ child-pair insertions. Each examination uses a constant number of field operations and array accesses. Under the computational assumptions in~\cref{sec:problem}, initializing, iterating over, updating, and copying the candidate sets, together with producing the output, takes $\bigO(W+M)$ operations. The $M$ term accounts for stage overhead, and $W$ already includes the output size $Z_M$. Since $M=\bigO(\log n)$ and $\kbar\to\infty$, the decoding time is $\bigO(\kbar\log n)$ with probability $1-o(1)$. This proves~(iii).
\end{proof}

The query bound is optimal up to constant factors whenever $\log(n^2/\kbar)=\Omega(\log n)$. In this regime, the lower bound $\Omega(\kbar\log(n^2/\kbar))$ from~\cite{LFS19} becomes $\Omega(\kbar\log n)$ and matches our upper bound.

\section*{Declaration of AI Usage}
The author led the research direction and the development of the query design, decoder, and proof strategy. AI models (ChatGPT 5.5 Pro and ChatGPT-5.6 Sol Pro) assisted with exploring mathematical arguments, drafting intermediate proof sketches, and organizing and drafting parts of the exposition. All AI-generated material was carefully checked and substantially revised by the author, who take full responsibility for the manuscript and the validity of its results.

\section*{Acknowledgment}
The author thanks Prof. Jonathan Scarlett for helpful discussions and valuable suggestions on the presentation of this work.

\newpage

\bibliographystyle{alpha}
\bibliography{references}

\newpage

\appendix

\section{Proof of the Typical-Graph Lemma}
\label{app:typical}


Throughout this proof, the hierarchy is fixed and $G\sim\ER(n,q)$. We verify the four conditions in~\cref{def:typical}.

\begin{proof}[Proof of \cref{lem:structure}]
We use the binomial upper-tail bound
\begin{equation}
\Pp[X\ge t]\le\left(\frac{e\lambda}{t}\right)^t,
\qquad
X\sim\operatorname{Bin}(m,p'),\quad
\lambda=mp',\quad t>0,\quad t\ge\lambda.
\label{eq:bintail}
\end{equation}
The same bound holds with $\lambda$ replaced by any $\lambda'$ satisfying $\lambda\le\lambda'\le t$.

\paragraph{Edge count.}
Set $\epsilon_n:=\kbar^{-1/3}$. Since $|E|$ is binomial with mean $\kbar$, a two-sided Chernoff bound gives, for all sufficiently large $n$,
\[
\Pp\!\left[\bigl||E|-\kbar\bigr|>\epsilon_n\kbar\right]
\le2\exp\!\left(-\frac{\epsilon_n^2\kbar}{3}\right)
=2\exp\!\left(-\frac{\kbar^{1/3}}3\right)=o(1).
\]
Thus~\eqref{eq:struct-edges} holds with probability $1-o(1)$.

\paragraph{Internal edges at the base level.}
There are $M_0:=g_0\binom{n/g_0}{2}$ potential edges with both endpoints in the same base block. Hence $H_{g_0}\sim\operatorname{Bin}(M_0,q)$ and
\[
\E[H_{g_0}]
=\kbar\frac{n/g_0-1}{n-1}
\le\frac{\kbar}{g_0}\le\sqrt{\kbar}.
\]
Applying~\eqref{eq:bintail} with $t=3\sqrt{\kbar}$ gives
\[
\Pp[H_{g_0}>3\sqrt{\kbar}]
\le\left(\frac e3\right)^{3\sqrt{\kbar}}=o(1),
\]
which proves~\eqref{eq:struct-internal}. Since each level refines the previous one, $H_{g_j}\le H_{g_0}$ for every $j$. Thus, on the same event, every level has at most $3\sqrt{\kbar}$ internally defective blocks.

\paragraph{Uniform crossing degree.}
Fix a level $j$ and a block $B$ of size $n/g_j$. Let $X_B$ count the edges with exactly one endpoint in $B$. Then
\[
X_B\sim\operatorname{Bin}\!\left(
\frac n{g_j}\left(n-\frac n{g_j}\right),q
\right).
\]
For $n\ge3$, its mean satisfies
\[
\lambda_j:=\E[X_B]
\le\frac{qn^2}{g_j}
\le\frac{3\kbar}{g_j}
\le\frac{3\sqrt{\kbar}}{2^j}.
\]
Set $d_*:=24\sqrt{\kbar}$. By~\eqref{eq:bintail}, we have
\[
\Pp[X_B\ge d_*]
\le\left(\frac{e\lambda_j}{d_*}\right)^{d_*}
\le\left(\frac{e}{8\cdot2^j}\right)^{d_*}.
\]
Since $g_j<2^{j+1}\sqrt{\kbar}$, a union bound over all blocks and levels gives
\begin{align*}
\Pp\!\left[\max_{0\le j\le L}\Delta_{g_j}^{\times}\ge d_*\right]
&\le\sum_{j=0}^L g_j\left(\frac{e}{8\cdot2^j}\right)^{d_*}\\
&\le2\sqrt{\kbar}\left(\frac e8\right)^{d_*}
  \sum_{j=0}^{\infty}2^{-j(d_*-1)}\\
&\le4\sqrt{\kbar}\left(\frac e8\right)^{24\sqrt{\kbar}}
=o(1).
\end{align*}
The geometric sum is at most $2$ once $d_*\ge2$. This proves~\eqref{eq:struct-cross}.

\paragraph{Weighted internal edges over the hierarchy.} The idea of the proof is to choose a deterministic cutoff level so that, with high probability, no internal edges remain at this or any later level. We then apply Bernstein's inequality to the weighted sum over earlier levels, expressed as a sum of independent edge indicators with bounded weights. Define
\begin{equation}
\omega_n:=\sqrt{\log n},
\qquad
b_n:=\min\{\kbar\omega_n,n\},
\label{eq:cutoff-scale}
\end{equation}
and let $h$ be the smallest index in $\{0,\ldots,L\}$ such that $g_h\ge b_n$. These quantities depend only on $n$ and $\kbar$. For all sufficiently large $n$, both $n$ and $\kbar\omega_n$ exceed $g_0$, so $h\ge1$. By the choice of $h$,
\begin{equation}
g_h=2g_{h-1}<2b_n\le2\kbar\omega_n.
\label{eq:cutoff-upper}
\end{equation}
If $h=L$, then $H_{g_h}=0$ because the blocks are singletons. If $h<L$, then $b_n=\kbar\omega_n$ and
\[
\E[H_{g_h}]
=\kbar\frac{n/g_h-1}{n-1}
\le\frac{\kbar}{g_h}\le\frac1{\omega_n}.
\]
Markov's inequality therefore gives, in both cases,
\begin{equation}
\Pp[H_{g_h}>0]\le\frac1{\omega_n}=o(1).
\label{eq:cutoff-empty}
\end{equation}

For each potential edge $e=\{x,y\}\in\binom{[n]}2$, define
\begin{equation}
w_h(e):=\sum_{j=0}^{h-1}g_j
\ind\{x,y\text{ lie in the same level-}j\text{ block}\},
\qquad
\xi_e:=\ind\{e\in E\}.
\label{eq:cutoff-weight}
\end{equation}
Changing the order of summation gives
\begin{equation}
K_h:=\sum_{e\in\binom{[n]}2}\xi_e w_h(e)
=\sum_{j=0}^{h-1}g_jH_{g_j}.
\label{eq:cutoff-sum}
\end{equation}
The weights are deterministic and satisfy
\begin{equation}
0\le w_h(e)\le\sum_{j=0}^{h-1}g_j=g_h-g_0<g_h.
\label{eq:cutoff-weight-bound}
\end{equation}

To compute the average weight, let $U$ be a uniformly chosen element of $\binom{[n]}2$, and write $\Pp_U$ and $\E_U[\cdot]$ for probability and expectation over $U$ alone. At a level with $g$ blocks,
\begin{equation}
\Pp_U[U\text{ is internal at this level}]
=\frac{g\binom{n/g}{2}}{\binom n2}
=\frac{n/g-1}{n-1}\le\frac1g.
\label{eq:internal-prob}
\end{equation}
It follows that $\E_U[w_h(U)]\le h\le\log n$. Hence
\begin{equation}
\mu_h:=\E[K_h]=q\sum_e w_h(e)
=\kbar\,\E_U[w_h(U)]\le\kbar\log n.
\label{eq:cutoff-mean}
\end{equation}

Since the cutoff and the weights are deterministic, the variables $\xi_e w_h(e)$ are independent. Their total variance satisfies
\begin{align}
V_h&:=\sum_e\Var(\xi_e w_h(e))
\le q\sum_e w_h(e)^2
\le qg_h\sum_e w_h(e)
=g_h\mu_h
\le\kbar g_h\log n.
\label{eq:cutoff-var}
\end{align}
Each summand differs from its mean by at most $g_h$. Bernstein's inequality therefore gives, for $t>0$,
\begin{equation}
\Pp[K_h-\mu_h\ge t]
\le\exp\!\left(-\frac{t^2}{2V_h+\frac23g_ht}\right).
\label{eq:cutoff-bernstein-general}
\end{equation}
Taking $t=2\kbar\log n$ and using~\eqref{eq:cutoff-mean}, \eqref{eq:cutoff-var}, and~\eqref{eq:cutoff-upper}, we obtain
\begin{align}
\Pp[K_h>3\kbar\log n]
\le\exp\!\left(-c\frac{\kbar\log n}{g_h}\right)
\le\exp\!\left(-c'\frac{\log n}{\omega_n}\right)
=\exp(-c'\sqrt{\log n})=o(1)
\label{eq:cutoff-bernstein}
\end{align}
for absolute constants $c,c'>0$.

The Bernstein bound above is unconditional. On the event $H_{g_h}=0$, every finer level also has no internal edges. Thus
\[
\sum_{j=0}^{L-1}g_jH_{g_j}=K_h
\]
on this event. Combining~\eqref{eq:cutoff-empty} and~\eqref{eq:cutoff-bernstein} gives
\begin{align*}
\Pp\!\left[\sum_{j=0}^{L-1}g_jH_{g_j}>3\kbar\log n\right]
\le\Pp[H_{g_h}>0]+\Pp[K_h>3\kbar\log n]
\le\frac1{\sqrt{\log n}}+\exp(-c'\sqrt{\log n})
=o(1).
\end{align*}
This proves~\eqref{eq:struct-weighted}.

\paragraph{Combining the estimates.}
A union bound over the four failure events shows that $G\in\cT(\epsilon_n)$ with probability $1-o(1)$. On this event, $|E|\le2\kbar$ for all sufficiently large $n$. By~\cref{lem:Dg-det},
\[
\sum_{j=0}^{L-1}D_{g_j}
\le L|E|+\sum_{j=0}^{L-1}g_jH_{g_j}
\le5\kbar\log n.
\]
This also proves~\eqref{eq:sum-Dg} and completes the proof.
\end{proof}

\section{Deferred Proofs}
\label{app:covariance}


This appendix proves the two propositions used in the covariance bound. Throughout, we fix a stage, its partition into $g$ blocks, and a graph $G\in\cT(\epsilon_n)$. All probabilities refer to the affine family of that stage.

For distinct non-defective pairs $e,f$, define
\begin{equation}
D:=e\setminus f,
\qquad
R_0:=[g]\setminus(e\cup f).
\label{eq:block-decomp}
\end{equation}
The sets $f,D,R_0$ form a disjoint partition of $[g]$. We call the blocks in $e\cup f$ \emph{special}. If $e$ and $f$ are disjoint, then $|D|=2$ and $|e\triangle f|=4$. If they share a block, then $|D|=1$ and $|e\triangle f|=2$. On $\cC_e\cap\cC_f$, a shared block belongs to both collision tests. This notation allows us to treat both cases in the same proof.

\subsection{\texorpdfstring{Proof of \cref{prop:sep-prob}}{Proof of the separation bound}}
\label{app:sep-prob}

We express the failure of separation as a union of events, each with probability at most $1/p$, and then apply a union bound.

\begin{proof}
By the definition of $\Omega_{ef}$,
\begin{align}
\Omega_{ef}^c\subseteq{}&
\cC_e^c\cup\cC_f^c
\cup\bigl(\cC_e\cap\cC_f\cap\{r_e^*=r_f^*\}\bigr)
{}\cup\bigcup_{w\in D}\{w\in T_f\}
\cup\bigcup_{w\in f\setminus e}\{w\in T_e\}.
\label{eq:covering}
\end{align}
Indeed, if both pairs have distinct slopes, separation fails only if the collision points have the same first coordinate or one of the required block exclusions fails.

Each equal-slope event has probability $1/p$. For $w\in D$, we have $w\notin f$, so~\cref{prop:composition} gives
\[
\Pp[w\in T_f\mid\cC_f,\ T_f=(r',s')]=\frac1p
\]
for every possible collision point. By the convention $\{w\in\dagger\}=\varnothing$, the event $\{w\in T_f\}$ implies $\cC_f$. The law of total probability therefore gives
\begin{align*}
\Pp[w\in T_f]
&=\sum_{(r',s')\in\F_p^2}
\Pp[\cC_f,\ T_f=(r',s')]
\Pp[w\in T_f\mid\cC_f,\ T_f=(r',s')]
=\frac1p\Pp[\cC_f]\le\frac1p.
\end{align*}
Terms with zero-probability conditioning events are omitted. Exchanging $e$ and $f$ gives the same bound for each $w\in f\setminus e$.

For disjoint pairs,~\cref{prop:first_cordinate}(ii) gives
\[
\Pp[r_e^*=r_f^*\mid\cC_e\cap\cC_f]=\frac1p.
\]
If the pairs share a block $u$,~\cref{prop:first_cordinate}(iii) gives the same probability when $(a_u,b_u)$ is also fixed. Averaging over these coefficients conditional on $\cC_e\cap\cC_f$ gives the same equality. Thus, in both cases,
\[
\Pp[\cC_e\cap\cC_f\cap\{r_e^*=r_f^*\}]
=\frac1p\Pp[\cC_e\cap\cC_f]\le\frac1p.
\]
There are $3+|D|+|f\setminus e|=3+|e\triangle f|$ events on the right-hand side of~\eqref{eq:covering}. A union bound now gives
\[
\Pp[\Omega_{ef}^c]\le\frac{3+|e\triangle f|}{p}\le\frac7p.
\qedhere
\]
\end{proof}

\subsection{\texorpdfstring{Proof of \cref{prop:product}}{Proof of the product bound}}
\label{app:product}

On $\Omega_{ef}$, we first fix the special coefficients and the full set of blocks included in the first collision test. This determines the first outcome. Two-point uniformity then describes the remaining randomness in the second test, and monotonicity bounds its probability of a positive outcome.

\begin{proof}
For $Q\subseteq[g]\setminus f$, let 
\[
\Phi_f(Q):=Y\!\left(\bigcup_{x\in f\cup Q}\cG_x\right)
\]
be the outcome of the query formed by the blocks in $f\cup Q$. Since $G$ is fixed, this outcome is deterministic. Adding blocks cannot remove an edge, so
\begin{equation}
Q\subseteq Q'\quad\Longrightarrow\quad
\Phi_f(Q)\le\Phi_f(Q').
\label{eq:Phi-monotone}
\end{equation}
For a finite set $X$ and $B\subseteq X$, define
\[
\mu_X(B):=\left(\frac1p\right)^{|B|}
\left(1-\frac1p\right)^{|X|-|B|}.
\]
This is the probability of selecting exactly $B$ when each element of $X$ is included independently with probability $1/p$. In particular, $\sum_{B\subseteq X}\mu_X(B)=1$. By~\cref{prop:composition} and the disjoint decomposition $[g]\setminus f=D\cup R_0$,
\begin{equation}
\rho_f=\sum_{B\subseteq R_0}\mu_{R_0}(B)
\sum_{J\subseteq D}\mu_D(J)\Phi_f(B\cup J).
\label{eq:rho-expanded}
\end{equation}

\paragraph{Partition of the separation event.}
Let $\mathbf C$ be the vector of coefficient pairs of the special blocks. Every condition defining $\Omega_{ef}$ depends only on $\mathbf C$. Let $\mathcal S$ be the set of possible values of $\mathbf C$ that satisfy these conditions. Then
\[
\Omega_{ef}=\bigcup_{\mathbf c\in\mathcal S}\{\mathbf C=\mathbf c\},
\]
where the events in the union are disjoint. Fix $\mathbf c\in\mathcal S$ and write the corresponding collision points as
\[
T_e(\mathbf c)=(r_{\mathbf c},s_{\mathbf c}),
\qquad
T_f(\mathbf c)=(r'_{\mathbf c},s'_{\mathbf c}).
\]
By the definition of $\mathcal S$,
\begin{equation}
r_{\mathbf c}\ne r'_{\mathbf c},
\qquad
w\notin T_f(\mathbf c)\ (w\in D),
\qquad
w\notin T_e(\mathbf c)\ (w\in f\setminus e).
\label{eq:special-separation}
\end{equation}
For $w\in R_0$, define the membership indicators
\[
U_w^{\mathbf c}:=\ind\{P_w(r_{\mathbf c})=s_{\mathbf c}\},
\qquad
V_w^{\mathbf c}:=\ind\{P_w(r'_{\mathbf c})=s'_{\mathbf c}\}.
\]
For each $\mathbf z=(z_w)_{w\in R_0}\in\{0,1\}^{R_0}$, define
\[
A_{\mathbf c,\mathbf z}:=
\{\mathbf C=\mathbf c\}\cap
\bigcap_{w\in R_0}\{U_w^{\mathbf c}=z_w\}.
\]
As $\mathbf c$ and $\mathbf z$ vary, these events form a disjoint partition of $\Omega_{ef}$. Each event has positive probability: after the special coefficients are fixed, the remaining coefficients are still independent and uniform, and every prescribed membership pattern has positive probability.

On $A_{\mathbf c,\mathbf z}$, the blocks included in the first test are exactly
\[
e\cup\{w\in R_0:z_w=1\}.
\]
The blocks in $e$ belong to their collision test, the blocks in $f\setminus e$ are excluded by~\eqref{eq:special-separation}, and $\mathbf z$ determines all remaining memberships. Since $G$ is fixed and $\cC_e$ holds, both $Y_{T_e}$ and $I_e$ are fixed on this event.

\paragraph{Conditional distribution of the second test.}
The coefficient pairs of the blocks in $R_0$ are independent of one another and of $\mathbf C$. For fixed $\mathbf c$, the variables $U_w^{\mathbf c}$ and $V_w^{\mathbf c}$ depend only on the coefficients of block $w$. Therefore, for every $\mathbf y\in\{0,1\}^{R_0}$,
\begin{align}
&\Pp[(V_w^{\mathbf c})_{w\in R_0}=\mathbf y\mid A_{\mathbf c,\mathbf z}]
=
\prod_{w\in R_0}
\frac{\Pp[U_w^{\mathbf c}=z_w,\ V_w^{\mathbf c}=y_w]}
{\Pp[U_w^{\mathbf c}=z_w]}
=\prod_{w\in R_0}
\Pp[V_w^{\mathbf c}=y_w\mid U_w^{\mathbf c}=z_w].
\label{eq:cell-factorization}
\end{align}
All denominators are positive. Since $r_{\mathbf c}\ne r'_{\mathbf c}$,~\cref{prop:two-point} implies that $U_w^{\mathbf c}$ and $V_w^{\mathbf c}$ are independent Bernoulli variables with parameter $1/p$. Hence~\eqref{eq:cell-factorization} gives
\begin{equation}
\Pp[(V_w^{\mathbf c})_{w\in R_0}=\mathbf y\mid A_{\mathbf c,\mathbf z}]
=\mu_{R_0}(\{w:y_w=1\}).
\label{eq:cell-law}
\end{equation}
Thus, after fixing the special coefficients and all memberships in the first test, the blocks in $R_0$ are still included independently in the second test with probability $1/p$.

\paragraph{Bounding the second retention probability.}
On $A_{\mathbf c,\mathbf z}$, both collision events hold. The second test contains exactly the blocks in $f\cup B$, where
\[
B:=\{w\in R_0:V_w^{\mathbf c}=1\}.
\]
The blocks in $D$ are excluded by~\eqref{eq:special-separation}. Since $\cC_f$ holds, the pair $f$ is retained exactly when this test is positive. By~\eqref{eq:cell-law},
\[
\Pp[\cE_f\mid A_{\mathbf c,\mathbf z}]
=\sum_{B\subseteq R_0}\mu_{R_0}(B)\Phi_f(B).
\]
For every $B\subseteq R_0$ and $J\subseteq D$, monotonicity gives $\Phi_f(B)\le\Phi_f(B\cup J)$. Averaging first over $J$ and then over $B$, we obtain
\begin{align}
\Pp[\cE_f\mid A_{\mathbf c,\mathbf z}]
&\le\sum_{B\subseteq R_0}\mu_{R_0}(B)
\sum_{J\subseteq D}\mu_D(J)\Phi_f(B\cup J)
=\rho_f.
\label{eq:cell-domination}
\end{align}

Let $\Gamma$ be the set of indices $(\mathbf c,\mathbf z)$ for which $A_{\mathbf c,\mathbf z}\subseteq\cE_e$. Since $I_e$ is constant on each event in the partition,
\[
\cE_e\cap\Omega_{ef}
=\bigcup_{(\mathbf c,\mathbf z)\in\Gamma}A_{\mathbf c,\mathbf z}
\]
is a disjoint union. Using~\eqref{eq:cell-domination}, the law of total probability, and~\eqref{eq:rho-le-q}, we get
\begin{align*}
\Pp[\cE_e\cap\cE_f\cap\Omega_{ef}]
=\sum_{(\mathbf c,\mathbf z)\in\Gamma}
\Pp[A_{\mathbf c,\mathbf z}]
\Pp[\cE_f\mid A_{\mathbf c,\mathbf z}]
\le\rho_f\sum_{(\mathbf c,\mathbf z)\in\Gamma}
\Pp[A_{\mathbf c,\mathbf z}]
=\rho_f\,\Pp[\cE_e\cap\Omega_{ef}] \le q_eq_f.
\end{align*}
The argument applies to both disjoint pairs and pairs sharing a block, proving~\eqref{eq:product-bound}.
\end{proof}




\end{document}